\pgfplotsset{compat=1.18}
\newtheoremstyle{nonitalic}{3pt}{3pt}{\normalfont}{}{\bfseries}{.}{.5em}{}
\theoremstyle{nonitalic}
\newtheorem{theorem}{Theorem}
\newtheorem{corollary}{Corollary}
\newtheorem{remark}{Remark}
\newtheorem{definition}{Definition}
\newcommand\numberthis{\addtocounter{equation}{1}\tag{\theequation}}
\def\BibTeX{{\rm B\kern-.05em{\sc i\kern-.025em b}\kern-.08em
    T\kern-.1667em\lower.7ex\hbox{E}\kern-.125emX}}
\newcommand*{\rom}[1]{\expandafter\@slowromancap\romannumeral #1@}
\begin{document}

\title{Semantics in Actuation Systems: From Age of Actuation to Age of Actuated Information}

\author{Ali Nikkhah, \IEEEmembership{Graduate Student Member, IEEE}, Anthony Ephremides, \IEEEmembership{Life Fellow, IEEE}, and Nikolaos Pappas, \IEEEmembership{Senior Member, IEEE}
\thanks{A. Nikkhah and N. Pappas are with the Department of Computer and Information Science, Linköping University, Linköping, Sweden, email: \{\texttt{ali.nikkhah, nikolaos.pappas\}@liu.se}. A. Ephremides is with the Electrical and Computer Engineering, University of Maryland, College Park, MD, USA, email: \texttt{etony@umd.edu}. \\ This work has been supported in part by the Swedish Research Council (VR), ELLIIT, and the European Union (6G-LEADER, 101192080, and SOVEREIGN, 101131481).\\ A shorter version has been published in \cite{nikkhah2024aoai}.}
}

\maketitle
\begin{abstract}
In this paper, we study the timeliness of actions in communication systems where actuation is constrained by control permissions or energy availability. Building on the Age of Actuation (AoA) metric, which quantifies the timeliness of actions independently of data freshness, we introduce a new metric, the \emph{Age of Actuated Information (AoAI)}. AoAI captures the end-to-end timeliness of actions by explicitly accounting for the age of the data packet at the moment it is actuated. 
We analyze and characterize both AoA and AoAI in discrete-time systems with data storage capabilities under multiple actuation scenarios. The actuator requires both a data packet and an actuation opportunity, which may be provided by a controller or enabled by harvested energy. Data packets may be stored either in a single-packet buffer or an infinite-capacity queue for future actuation. For these settings, we derive closed-form expressions for the average AoA and AoAI and investigate their structural differences.
While AoA and AoAI coincide in instantaneous actuation systems, they differentiate when data buffering is present. Our results reveal counterintuitive regimes in which increasing update or actuation rates degrade action timeliness for both AoA and AoAI.
Moreover, as part of the analysis, we obtain a novel closed-form characterization of the steady-state distribution of a Geo/Geo/1 queue operating under the FCFS discipline, expressed solely in terms of the queue length and the age of the head-of-line packet. The proposed metrics and analytical results provide new insights into the semantics of timeliness in systems where information ultimately serves the purpose of actuation.
\end{abstract}

\section{Introduction}
Recent advances in communication systems have increasingly shifted attention away from traditional performance metrics such as throughput and delay, toward \emph{semantic and goal-oriented notions of performance}, where the value of communicated information is related to its purpose and eventual utilization \cite{kountouris2021semantics}. This perspective underlies the emerging paradigm of \emph{semantic communications}, in which communication efficiency is evaluated in terms of how effectively transmitted information enables decisions, actions, or control objectives rather than merely how reliably bits are delivered \cite{gunduz2022beyond,lu2023semantics}. A prominent example of this shift arises in systems that integrate \emph{sensing, communication, and actuation}, as envisioned in applications such as networked control systems and cyber-physical systems. In such systems, information is valuable only to the extent that it enables \emph{timely actions}. Consequently, traditional metrics that quantify the freshness of information at a receiver do not fully capture the end-to-end timeliness of the overall system when actuation is explicitly involved.

The \emph{Age of Information (AoI)} \cite{pappas2023age} is a well-established metric that has largely replaced earlier timeliness measures in communication networks and control systems. AoI quantifies the freshness of information by measuring the time elapsed since the most recent update was generated at a destination. Although AoI is inherently an information-centric rather than a contextual metric, its analytical tractability makes it a powerful tool and its conceptual robustness a natural foundation for developing semantics-aware notions of timeliness \cite{luo2025informationfreshnesssemanticsinformation}.

In many emerging communication systems as in networked control systems, information is not an end in itself but a means to enable \emph{timely actions}. While substantial effort has been devoted to quantifying the freshness of information delivered to a receiver, the timeliness and relevance of the \emph{actions} triggered by that information have remained largely unexplored. In practical systems, actuation is often constrained by scheduling decisions, controller permissions, or energy availability, leading to data packets being buffered and actuated at a later time. As a result, two actions that occur equally often may differ substantially in their effectiveness depending on the age of the data that triggered them. This reveals a fundamental limitation of information-centric timeliness metrics and motivates the need for action-aware measures that explicitly account for both \emph{when} an action occurs and \emph{how fresh} the actuating information is.

\subsection{Related Work}

A growing body of literature has investigated the role of AoI in control systems involving actuation. In this context, \cite{zhao2019toward} suggests that the co-design of communication and control systems can be made more efficient by explicitly accounting for predictive horizons. In \cite{chang2020age}, the AoI of actuation updates is analyzed by modeling transmission in two stages, from the sensor to the controller and from the controller to the actuator, with the age evolving from the command generation instant to the actuation time while incorporating predictive execution. The trade-off between monetary cost and the AoI of actuation status updates is studied in \cite{kyung2024priority}, where different costs are associated with delivering updates from the controller to the actuator. However, the age is still measured from the controller side. An \emph{effective AoI} metric is proposed in \cite{chang2021effective} for wireless feedback control systems, capturing both the inter-update interval at the source and the delay from sampling to execution. Joint optimization of sampling and scheduling policies is considered in \cite{champatiperformance2019}.

Despite these efforts, the notion of actuation in the above studies remains largely abstract: actuation or execution is typically considered complete upon packet reception, without explicitly accounting for the actuation process itself. This abstraction stems in part from the presence of two distinct phases in such systems: information transmission from the plant (sensor) to the controller, and from the controller to the actuator. While this separation introduces additional complexity, it can be rigorously addressed using queueing-theoretic models and appropriate scheduling and service policies.

A key element that fundamentally enables actuation is the availability of \emph{energy}, or more generally, the presence of an \emph{opportunity} to act. With the proliferation of small-scale and autonomous devices, energy harvesting (EH) has become essential for sustaining long-term system operation \cite{sudevalayam2011energy}. Although EH is widely incorporated into communication system models, it is mainly considered in the context of data transmission rather than actuation.

Within the literature on AoI and EH, two main research directions can be identified. The first considers controlled energy sources, such as wireless power transfer systems \cite{nikkhah2023age,nikkhah2023ageo,krikidis2019average,ibrahim2016stability,abdelmagid2020aoi,li2026aoi,zhang2025aoi}, where energy availability is actively managed as part of the system design. The second direction, to which the present work belongs, focuses on harvesting ambient energy. In this category, continuous-time EH models are typically based on Poisson processes \cite{feng2021age,wu2017optimal,yates2021age,zheng2019closed,gindullina2021age,bacingolu2019optimal,elmagid2022age,arafa2019age,abdelmagid2022distribution}. In contrast, discrete-time EH models commonly rely on Bernoulli processes \cite{jia2021age,zhao2025age,hatami2022on,hatami2021aoi,chen2021optimization,delfani2025semantics,xu2023optimal,xiao2024infromation,ngo2025timely,jaiswal2023age,banerjee2024minimizing}, which is also adopted in this work.

In this paper, we employ energy as the enabling resource for executing actuations and integrate the semantics of the source and destination into the communication system design by focusing explicitly on the \emph{timeliness of actions}. Actions are commanded by data packets and enabled either by energy availability or by permissions from a controller. We show that although increasing the rate of status updates generally improves information freshness, it does not necessarily improve system performance when evaluated in terms of the timeliness of executed actions along the entire chain from data generation to reception and actuation. This observation highlights the limitations of AoI-centric evaluations and motivates a more action-aware, semantic perspective.

In our earlier work \cite{nikkhah2023age,nikkhah2023ageo}, we analyzed the timeliness of actions that occur immediately upon data reception and incorporated energy transmission into the system model. In contrast, the present paper considers systems that harvest ambient energy rather than relying on wireless power transfer. More importantly, we address scenarios in which a newly received data packet cannot be actuated immediately due to insufficient energy and must therefore be stored for future actuation. As a result of this storage capability, the age of the data triggering an action may vary at the time of actuation, implying that different actuations may differ in timeliness.

\subsection{Contributions}
The main contributions of this paper are:
\begin{itemize}
	\item We extend the \emph{Age of Actuation (AoA)} metric to characterize the timeliness of actions independently of the age of the data packets that trigger them.
	\item We introduce a new metric, termed the \emph{Age of Actuated Information (AoAI)}, which evaluates the timeliness of actions while explicitly accounting for the age of the actuating data packets.
	\item We analytically characterize both metrics across four scenarios within three cases of a unified system model and provide insights into their similarities and fundamental differences.
\end{itemize}

The two metrics coincide in instantaneous actuation systems, such as those studied in \cite{nikkhah2023age,nikkhah2023ageo}, but diverge in scenarios where data storage is possible, emphasizing the importance of accounting for the freshness of information at the moment of actuation.

\section{System Model} \label{SystemModel}
We consider four scenarios within three cases of a general system model, illustrated in Fig. \ref{fig:System Models}. The first case, depicted in Fig. \ref{fig:An Infinite Sized Queue with a Controller}, consists of a Geo/Geo/1 for storing data packets and a controller to give permissions to an actuator to actuate or not. This case is analyzed under two policies: First-Come First-Served (FCFS) and Last-Come First-Served (LCFS), with preemption in service, which are the first two scenarios. The third scenario, illustrated in Fig. \ref{fig:A Buffer with a Controller}, consists of a  Geo/Geo/1/1 instead of the Geo/Geo/1. Lastly, as a special case to illustrate what the controller can be in practice, we consider a battery unit, which would be the fourth scenario. The availability of energy in the battery represents the permission of the controller as an opportunity for an actuation. In all our discrete-time systems in Fig. \ref{fig:System Models}, time is divided into equal-length slots, indexed by $n$. The system includes an actuator at the receiver, which operates based on the information contained in data packets from a process occurring at the origin. A data packet is generated and successfully transmitted and arrives at the queue at each time slot with a probability of $\lambda_1$. In case the system fails to transmit the generated data to the queue successfully, the packet is dropped. We define the event $\Lambda_1(n)=\{0,1\}$ at time slot $
n$. Then, $\Lambda_1(n)=0$ denotes an unsuccessful, and $\Lambda_1(n)=1$ denotes a successful data packet reception at time $n$. 

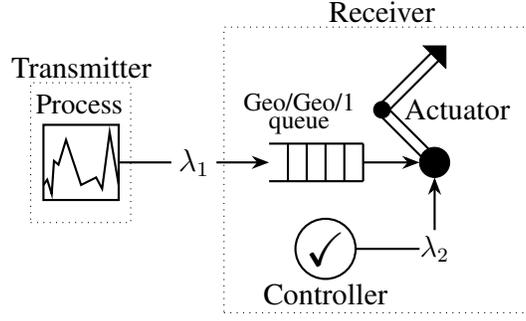
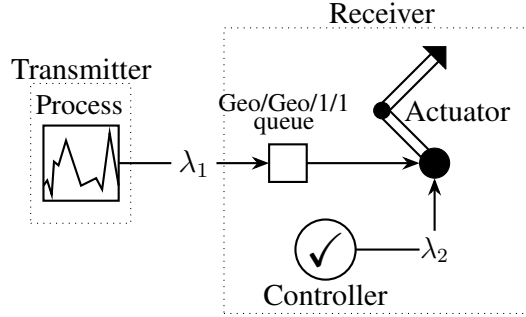
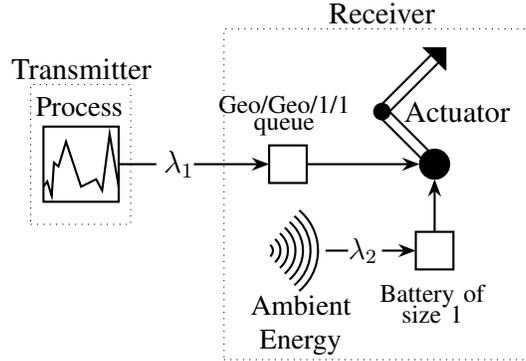
\begin{figure}[t]
\centering
\begin{subfigure}[b]{0.9\linewidth}
\centering
\begin{tikzpicture}

\draw [thick] (4.5,0) rectangle (5.5,1) node[scale=0.9] [xshift=-0.58cm,yshift=0.3cm] {Process}; 
\draw [thick] (4.5,0.2)-- (4.55,0.27) -- (4.61,0.08) --(4.64,0.52) -- (4.7,0.47) -- (4.8,0.8) -- (5,0.2) -- (5.2,0.3) -- (5.28,0.15) -- (5.38,0.9) -- (5.5,0.2);

\draw [ thick] (5.5,0.5) -- (6.2,0.5) node[xshift=0.3cm, yshift=0cm] {$\lambda_1$};

\draw[-Stealth, thick]  (6.8,0.5) -- (7.5,0.5) node [ xshift=0.4cm, yshift=0.5cm,scale=0.8] {queue} node [ xshift=0.4cm, yshift=0.8cm,scale=0.8] {Geo/Geo/1};;

\draw [thick] (7.5,0.25) -- (8.75,0.25);
\draw [thick] (7.5,0.75) -- (8.75,0.75);
\draw [thick] (7.75,0.25) -- (7.75,0.75);
\draw [thick] (8,0.25) -- (8,0.75);
\draw [thick] (8.25,0.25) -- (8.25,0.75);
\draw [thick] (8.5,0.25) -- (8.5,0.75);
\draw [thick] (8.75,0.25) -- (8.75,0.75);

\draw[-Stealth, thick] (9.5-0.75,1-0.5) -- (11-1.5,1-0.5);
\draw[fill,black] (11.2-1.5,1-0.5) circle (0.2cm);
\draw[thick] (11.25-1.5,1.05-0.5) -- (11.25-0.7-1.5,1.05+0.7-0.5);
\draw[thick] (11.15-1.5,0.95-0.5) -- (11.15-0.7-1.5,0.95+0.7-0.5);
\draw[fill,black] (9,1.70-0.5) circle (0.11cm);
\node [ xshift=10cm, yshift=1.20cm,scale=0.95] {Actuator};
\draw[thick] (10.55-1.5,1.65-0.5) -- (10.55+0.7-1.5,1.65+0.7-0.5);
\draw[thick] (10.45-1.5,1.75-0.5) -- (10.45+0.7-1.5,1.75+0.7-0.5);

\draw[fill=black] (10.55+0.8-1.5,1.65+0.6-0.5) -- (10.45+0.6-1.5,1.75+0.8-0.5) -- (10.55+0.8-1.5,1.75+0.8-0.5)-- (10.55+0.8-1.5,1.65+0.6-0.5);

\node [ xshift=8.25cm, yshift=-0.65cm,scale=1.5] {\checkmark} ;
\draw[thick]  (8.25,-0.65) circle (0.4) node [ xshift=0cm, yshift=-0.6cm,scale=0.95] {Controller} ;
\draw[ thick] (8.65,-0.65) -- (9.5,-0.65) node [ xshift=0.2cm, yshift=0cm,scale=0.95] {$\lambda_2$} ;

\draw[-Stealth, thick] (9.7,-0.4) -- (9.7,0.3);  
  
\draw [dotted] (4.33,-0.3) rectangle (5.66,1.56)  node [ xshift=-0.66cm, yshift=0.2cm,scale=0.95] {Transmitter};

\draw [dotted] (6.9,-1.5) rectangle (11,2.3) node [ xshift=-2cm, yshift=0.2cm,scale=0.95] {Receiver};

\end{tikzpicture}

\caption{A Geo/Geo/1 with a controller.}
\label{fig:An Infinite Sized Queue with a Controller}
\end{subfigure}

\vspace{1em} 

\begin{subfigure}[b]{0.9\linewidth}
\centering
\begin{tikzpicture}

\draw [thick] (4.5,0) rectangle (5.5,1) node[scale=0.9] [xshift=-0.58cm,yshift=0.3cm] {Process}; 
\draw [thick] (4.5,0.2)-- (4.55,0.27) -- (4.61,0.08) --(4.64,0.52) -- (4.7,0.47) -- (4.8,0.8) -- (5,0.2) -- (5.2,0.3) -- (5.28,0.15) -- (5.38,0.9) -- (5.5,0.2);

\draw [ thick] (5.5,0.5) -- (6.2,0.5) node[xshift=0.3cm, yshift=0cm] {$\lambda_1$};

\draw[-Stealth, thick]  (6.8,0.5) -- (7.5,0.5);
\draw [thick] (7.5,0.75) rectangle (8,0.25) node [ xshift=-0.3cm, yshift=0.75cm,scale=0.8] {queue} node [ xshift=-0.3cm, yshift=1.05cm,scale=0.8] {Geo/Geo/1/1};

\draw[-Stealth, thick] (9.5-1.5,1-0.5) -- (11-1.5,1-0.5);
\draw[fill,black] (11.2-1.5,1-0.5) circle (0.2cm);
\draw[thick] (11.25-1.5,1.05-0.5) -- (11.25-0.7-1.5,1.05+0.7-0.5);
\draw[thick] (11.15-1.5,0.95-0.5) -- (11.15-0.7-1.5,0.95+0.7-0.5);
\draw[fill,black] (9,1.70-0.5) circle (0.11cm);
\node [ xshift=10cm, yshift=1.20cm,scale=0.95] {Actuator};
\draw[thick] (10.55-1.5,1.65-0.5) -- (10.55+0.7-1.5,1.65+0.7-0.5);
\draw[thick] (10.45-1.5,1.75-0.5) -- (10.45+0.7-1.5,1.75+0.7-0.5);

\draw[fill=black] (10.55+0.8-1.5,1.65+0.6-0.5) -- (10.45+0.6-1.5,1.75+0.8-0.5) -- (10.55+0.8-1.5,1.75+0.8-0.5)-- (10.55+0.8-1.5,1.65+0.6-0.5);

\node [ xshift=8.25cm, yshift=-0.65cm,scale=1.5] {\checkmark} ;
\draw[thick]  (8.25,-0.65) circle (0.4) node [ xshift=0cm, yshift=-0.6cm,scale=0.95] {Controller} ;
\draw[ thick] (8.65,-0.65) -- (9.5,-0.65) node [ xshift=0.2cm, yshift=0cm,scale=0.95] {$\lambda_2$} ;

\draw[-Stealth, thick] (9.7,-0.4) -- (9.7,0.3);

\draw [dotted] (4.33,-0.3) rectangle (5.66,1.56)  node [ xshift=-0.66cm, yshift=0.2cm,scale=0.95] {Transmitter};

\draw [dotted] (6.9,-1.5) rectangle (11,2.3) node [ xshift=-2cm, yshift=0.2cm,scale=0.95] {Receiver};

\end{tikzpicture}

\caption{A Geo/Geo/1/1 with a controller.}
\label{fig:A Buffer with a Controller}
%\vspace{-5pt}
\end{subfigure}

\vspace{1em}

\begin{subfigure}[b]{0.9\linewidth}
\centering
\begin{tikzpicture}

\draw [thick] (4.5,0) rectangle (5.5,1) node[scale=0.9] [xshift=-0.58cm,yshift=0.3cm] {Process}; 
\draw [thick] (4.5,0.2)-- (4.55,0.27) -- (4.61,0.08) --(4.64,0.52) -- (4.7,0.47) -- (4.8,0.8) -- (5,0.2) -- (5.2,0.3) -- (5.28,0.15) -- (5.38,0.9) -- (5.5,0.2);

\draw [thick] (5.5,0.5) -- (6,0.5) node[xshift=0.3cm, yshift=0cm] {$\lambda_1$};

 \foreach \r in {0.1,0.2,...,0.7} 
    \draw[thick,shift={(7.45cm,-0.65cm)}] (0,0) ++(-50:\r cm) arc (-50:50:\r cm) ;
    \draw[thick] (8.25,-0.65) -- (8.55,-0.65) node[scale=0.9,align=center, xshift=-0.7cm,yshift=-1.1cm]  {Ambient \\ Energy} ;
    \draw[-Stealth, thick] (9,-0.65) -- (9.7-0.25,-0.65) node [ xshift=-0.7cm, yshift=0cm,scale=0.95] {$\lambda_2$};

\draw[-Stealth, thick]  (6.5,0.5) -- (7.5,0.5);

\draw [thick] (7.5,0.75) rectangle (8,0.25) node [ xshift=-0.3cm, yshift=0.75cm,scale=0.8] {queue} node [ xshift=-0.3cm, yshift=1.05cm,scale=0.8] {Geo/Geo/1/1};

\draw[-Stealth, thick] (9.5-1.5,1-0.5) -- (11-1.5,1-0.5);
\draw[fill,black] (11.2-1.5,1-0.5) circle (0.2cm);
\draw[thick] (11.25-1.5,1.05-0.5) -- (11.25-0.7-1.5,1.05+0.7-0.5);
\draw[thick] (11.15-1.5,0.95-0.5) -- (11.15-0.7-1.5,0.95+0.7-0.5);
\draw[fill,black] (9,1.70-0.5) circle (0.11cm);
\node [ xshift=10cm, yshift=1.20cm,scale=0.95] {Actuator};
\draw[thick] (10.55-1.5,1.65-0.5) -- (10.55+0.7-1.5,1.65+0.7-0.5);
\draw[thick] (10.45-1.5,1.75-0.5) -- (10.45+0.7-1.5,1.75+0.7-0.5);

\draw[fill=black] (10.55+0.8-1.5,1.65+0.6-0.5) -- (10.45+0.6-1.5,1.75+0.8-0.5) -- (10.55+0.8-1.5,1.75+0.8-0.5)-- (10.55+0.8-1.5,1.65+0.6-0.5);

\draw[thick] (9.7-0.25,-0.4) rectangle (9.7+0.25,-0.9) node [ xshift=-0.28cm, yshift=-0.35cm,scale=0.8] {Battery of} node [ xshift=-0.28cm, yshift=-0.6cm,scale=0.8] {size 1};

\draw[-Stealth, thick] (9.7,-0.4) -- (9.7,0.3);

\draw [dotted] (4.33,-0.3) rectangle (5.66,1.56)  node [ xshift=-0.66cm, yshift=0.2cm,scale=0.95] {Transmitter};

\draw [dotted] (6.9,-2.1) rectangle (11,2.3) node [ xshift=-2cm, yshift=0.2cm,scale=0.95] {Receiver};

\end{tikzpicture}

\caption{A Geo/Geo/1/1 with a one-sized battery}
\label{fig:A Buffer with a One-Sized Battery}
\end{subfigure}

\caption{The considered system model in three cases that capture different aspects.}
\label{fig:System Models}
\end{figure}

\subsection{A Geo/Geo/1 with a Controller} \label{An Infinite-sized Queue with a Controller}
Consider the case depicted in Fig. \ref{fig:An Infinite Sized Queue with a Controller}. A Geo/Geo/1 is available to store data packets for later actuations. With the probability of $\lambda_2$ at each time slot, the controller permits an actuation. We define the event $\Lambda_2(n)=\{0,1\}$, where $\Lambda_2(n)=0$ denotes no permission, and $\Lambda_2(n)=1$ denotes permission for an actuation.
The state of the queue is represented by $Q(n) \in \{0,1,2,\hdots \}$, where it indicates the number of data packets in the queue at the end of time slot $n$. The dynamics of the queue can be described by a process of Geo/Geo/1 with the arrival rate of $\lambda_1$ and the service rate of $\lambda_2$.

At the beginning of each time slot, the system checks for the controller's permission and the availability of a data packet, whether stored or newly received in that time slot. If both are checked, then the actuator actuates the oldest packet under the FCFS policy and the freshest packet under the LCFS policy with pre-emption, is service\footnote{A newly arrived packet is prioritized for actuation even within the arrival time slot.}.  In other words
\begin{equation} \label{Actuation_Equation}
\alpha(n)=
\begin{cases}
1 &  \Lambda_2(n)=1 \land (Q(n-1)\neq 0 \lor \Lambda_1(n)=1),\\
0 &  \text{otherwise},
\end{cases}
\end{equation}
where in time slot $n$, $\alpha(n)=1$ indicates an actuation while $\alpha(n)=0$ indicates no actuation.

\subsection{A Geo/Geo/1/1 with a Controller} \label{A Geo/Geo/1/1 with a Controller}

Consider the case depicted in Fig. \ref{fig:A Buffer with a Controller}. Everything is the same as section \ref{An Infinite-sized Queue with a Controller} except there is a Geo/Geo/1/1 to store data packets. This represents the need for only the freshest data packet at each time slot. 
Older data packets can be discarded once the most recent data packet is stored. A data packet remains in the Geo/Geo/1/1 if it is the freshest available and has not been utilized for actuation. Therefore, upon receiving a new data packet, it replaces the existing one in the queue. Also, once a data packet is actuated, it is removed from the queue. Expression (\ref{Actuation_Equation}) also holds for this scenario.

We can represent the state of the system using a Markov chain, $Q(n) \in \{0,1\}$, to capture the dynamics of the system, where $Q(n)=0$ signifies an empty queue and $Q(n)=1$ represents a full queue at the end of the time slot $n$. The transition probability matrix for this Markov chain is
\setcounter{MaxMatrixCols}{30}
\begin{equation*} \label{TPM_SystemModel_Geo/Geo/1/1_Controller}
\mathbf{P}_s=
\begin{bmatrix}
\bar{\lambda}_1  + \lambda_1 \lambda_2 &  \lambda_1 \bar{\lambda}_2 \\
\lambda_2 & \bar{\lambda}_2  \\
\end{bmatrix}.
\end{equation*}

The first row and the first column represent the state $Q(n)=0$ and the second row and the second column represent the state $Q(n)=1$.

\subsection{A Geo/Geo/1/1 with a Battery} \label{Battery representing the controller}

As a practical instance of a controller, consider a battery instead, depicted in Fig. \ref{fig:A Buffer with a One-Sized Battery}. The battery is capable of storing a single energy unit. With a probability of $\lambda_2$ at each time slot, ambient energy (such as solar, wind, vibration, and radio frequency) can be harvested to charge the battery with one energy unit. Here, $\Lambda_2(n)=0$ denotes unavailability and $\Lambda_2(n)=1$ denotes the availability of an energy unit to harvest.

A two-dimensional Markov chain, denoted as $(Q(n), B(n))$, captures the dynamics of the system states. The state of the queue is represented the same way as section \ref{A Geo/Geo/1/1 with a Controller}. Similarly, $B(n) \in \{0,1\}$ indicates the battery's state, where $B(n)=0$ represents an empty battery and $B(n)=1$ represents a full battery at the end of time slot $n$. The system can be in three states: $(0,0)$, $(0,1)$, and $(1,0)$. Note that state $(1,1)$ is not feasible, as if the queue contains an unused data packet and sufficient energy is available, both would be utilized concurrently for actuation. Consequently, the transition probability matrix for this Markov chain is
\setcounter{MaxMatrixCols}{30}
\begin{equation*} \label{TPM_SystemModel_Geo/Geo/1/1_Battery}
\mathbf{P}_s=
\begin{bmatrix}
\lambda_1 \lambda_2 + \bar{\lambda}_1 \bar{\lambda}_2 &  \bar{\lambda}_1 \lambda_2 & \lambda_1 \bar{\lambda}_2\\
\lambda_1 \bar{\lambda}_2 & \lambda_2 +\bar{\lambda}_1 \bar{\lambda}_2 &0 \\
 \lambda_2  & 0 & \bar{\lambda}_2  \\
\end{bmatrix}.
\end{equation*}

The rows and columns represent the joint states $(Q(n),B(n))$ for states $(0,0)$, $(0,1)$, and $(1,0)$, respectively.

\section{Analytical Results} 
This section presents the definitions and analysis for the AoA and the AoAI for the considered scenarios and the cases of the system model. We will show the differences and identify potential similarities between AoI, AoA, and AoAI.

\subsection{Age of Information (AoI)}
AoI represents the time elapsed since the generation of the freshest data packet available. It is defined as $I(t)=t-u(t)$, where $t$ is the time instant and $u(t)$ is the timestamp of the freshest successfully received data packet. Consequently, in our system model, in each time slot $n$, the AoI is reset to $1$ after a successful reception of a data packet (with probability $\lambda_1$), and increases by $1$ in the event of a failure in reception of a data packet (with probability $1-\lambda_1$):
\begin{equation}
I(n)=
\begin{cases}
1 & \Lambda_1(n)=1,\\
I(n-1)+1 & \Lambda_1(n)=0.
\end{cases}
\end{equation}

Then, the average AoI would be $\bar{I}=\frac{1}{\lambda_1}$ \cite{pappas2023age} for all the cases and scenarios discussed in section \ref{SystemModel}.

\subsection{Age of Actuation (AoA)} \label{AoASection}
AoA is defined as $A(t)=t-a(t)$, where $a(t)$ is the timestamp of the last performed actuation. In our system model, at each time slot $n$, AoA is reset to $1$ if there is an actuation and increases by $1$ otherwise, irrespective of the age of the actuated data packet:
\begin{equation}
A(n)=
\begin{cases}
1 &  \alpha(n)=1,\\
A(n-1)+1 & \alpha(n)=0.
\end{cases}
\end{equation}

\textit{This metric is relevant for capturing the timeliness of actions without considering the freshness of the actuated information.}

\begin{theorem} \label{theo:AoA}
The closed-form expressions of the average AoA for the four scenarios defined in the three cases of the system model are as follows:

\begin{enumerate}[label=(\roman*)]

\item \label{theo:The Average AoA Infinite} The average AoA for the Geo/Geo/1 with a controller under both policies of FCFS and LCFS, is 
\begin{equation} \label{TheAverageAoAforQueue}
\bar{A}=\frac{1}{\lambda_1}.
\end{equation}

\item  \label{theo:The Average AoA cache} The average AoA for the Geo/Geo/1/1 with a controller is
\begin{equation}
\bar{A}=\frac{ (\lambda_2-1) (\lambda_1^2+ \lambda_1 \lambda_2)   - \lambda_2^2}{\lambda_1 \lambda_2 \left(\lambda_1  ( \lambda_2-1) - \lambda_2\right) }.
\end{equation}

\item  \label{theo:The Average AoA battery} The average AoA for the Geo/Geo/1/1 with a battery is given by %(\ref{Average_AoA_Battery}).
\begin{equation} \label{Average_AoA_Battery}
\bar{A}=
\frac{\lambda_{1}^{4} \left(\lambda_{2} - 1\right)^{3} - 2 \lambda_{1}^{3} \lambda_{2} \left(\lambda_{2} - 1\right)^{2} - \lambda_{1}^{2} \lambda_{2}^{2} \left(2 \lambda_{2}^{2} - 3 \lambda_{2} + 1\right) + \lambda_{1} \lambda_{2}^{3} \left(3 \lambda_{2} - 2\right) - \lambda_{2}^{4}}{\lambda_{1} \lambda_{2} \left(\lambda_{1} \left(\lambda_{2} - 1\right) - \lambda_{2}\right) \left(\lambda_{1}^{2} \left(\lambda_{2} - 1\right)^{2} + \lambda_{1} \left(- 2 \lambda_{2}^{2} + \lambda_{2}\right) + \lambda_{2}^{2}\right)}.
\end{equation}

\end{enumerate}

\end{theorem}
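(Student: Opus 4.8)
The plan is to reduce all four cases to a single renewal–reward computation on the actuation point process. The AoA sample path is a discrete sawtooth: between two consecutive actuations separated by $T$ slots, the age runs through the values $1,2,\dots,T$, so the reward accumulated over one such cycle is $\sum_{k=1}^{T}k = T(T+1)/2$ while the cycle length is $T$. Hence, whenever the actuation epochs form a renewal process with generic inter-actuation time $T$ and PGF $f(z)=\mathbb{E}[z^{T}]$, the renewal–reward theorem gives
\[
\bar A \;=\; \frac{\mathbb{E}\!\left[T(T+1)/2\right]}{\mathbb{E}[T]} \;=\; \frac{\mathbb{E}[T^{2}]+\mathbb{E}[T]}{2\,\mathbb{E}[T]} \;=\; 1+\frac{f''(1)}{2f'(1)},
\]
using $\mathbb{E}[T]=f'(1)$ and $\mathbb{E}[T(T-1)]=f''(1)$. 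The problem then becomes: identify $f$ in each scenario and evaluate $f'(1)$ and $f''(1)$.

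The renewal property must be checked per case. In cases \ref{theo:The Average AoA cache} and \ref{theo:The Average AoA battery} the system is reset to a fixed state immediately after every actuation --- the empty buffer for the Geo/Geo/1/1 with a controller, and $(0,0)$ for the Geo/Geo/1/1 with a battery, since an actuation consumes a packet and, in the battery case, the single stored energy unit --- so the inter-actuation times are i.i.d.\ and the renewal structure is immediate. In case \ref{theo:The Average AoA Infinite} the post-actuation queue length is not deterministic, so I would instead invoke the discrete-time analogue of Burke's theorem: in steady state the departure process of a stable ($\lambda_1<\lambda_2$) Geo/Geo/1 queue is Bernoulli with rate $\lambda_1$. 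Because an actuation coincides with a departure, and because the queue-length process --- hence the departure epochs --- is identical under FCFS and under LCFS with preemption (a service completion occurs with probability $\lambda_2$ in every nonempty slot, irrespective of which packet is in service), the actuation process is renewal with $T\sim\mathrm{Geom}(\lambda_1)$ under both policies. Substituting $f(z)=\lambda_1 z/(1-\bar\lambda_1 z)$, so $f'(1)=1/\lambda_1$ and $f''(1)=2\bar\lambda_1/\lambda_1^{2}$, into the formula above yields $\bar A=1/\lambda_1$, which is \eqref{TheAverageAoAforQueue}.

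For \ref{theo:The Average AoA cache} and \ref{theo:The Average AoA battery} I would obtain $f$ by a one-slot first-step analysis on the finite ``inter-actuation'' chain (the Markov chain of Section~\ref{SystemModel} with the actuation event made absorbing). For the controller-plus-buffer case, writing $f_e$ and $f_f$ for the PGFs of the time to the next actuation starting from the empty and full states,
\[
f_f(z)=\lambda_2 z+\bar\lambda_2 z\,f_f(z), \qquad f_e(z)=\lambda_1\lambda_2 z+\lambda_1\bar\lambda_2 z\,f_f(z)+\bar\lambda_1 z\,f_e(z),
\]
and $f=f_e$. For the battery case the same conditioning on the three states $(0,0),(0,1),(1,0)$ gives $g_{10}(z)=\lambda_2 z/(1-\bar\lambda_2 z)$, $g_{01}(z)=\lambda_1 z/(1-\bar\lambda_1 z)$, and
\[
g_{00}(z)=\frac{\lambda_1\lambda_2 z+\lambda_1\bar\lambda_2 z\,g_{10}(z)+\bar\lambda_1\lambda_2 z\,g_{01}(z)}{1-\bar\lambda_1\bar\lambda_2 z},
\]
with $f=g_{00}$. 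In each case I would then verify $f(1)=1$ as a consistency check (equivalently, $f'(1)$ equals the reciprocal of the steady-state actuation probability obtained from $\mathbf{P}_s$), compute $f'(1)$ and $f''(1)$, substitute into $\bar A=1+f''(1)/(2f'(1))$, and simplify to the stated closed forms.

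I expect the only real obstacle to be the algebra in the battery case: although $g_{00}$ is a ratio of low-degree polynomials in $z$, its second derivative at $z=1$ produces the quartic numerator and the factored cubic denominator of \eqref{Average_AoA_Battery}, so careful bookkeeping (or a symbolic check) is required to bring the result to that form. A secondary point needing care is the justification of the Burke-type renewal claim and of the FCFS/LCFS queue-length equivalence in case \ref{theo:The Average AoA Infinite}, since the clean answer $1/\lambda_1$ relies on $T$ being exactly geometric with parameter $\lambda_1$.
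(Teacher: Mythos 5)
Your renewal--reward reduction is sound in principle, and for cases \ref{theo:The Average AoA Infinite} and \ref{theo:The Average AoA cache} it works (I checked that your $f_e(z)=\lambda_1\lambda_2 z\big/\big((1-\bar\lambda_1 z)(1-\bar\lambda_2 z)\big)$ reproduces the stated formula for the Geo/Geo/1/1 with a controller; for the Geo/Geo/1 case your argument differs genuinely from the paper, which instead augments the chain with the AoA value itself, and it hinges on the discrete-time Burke theorem, which you would need to state and justify under this paper's convention that a packet arriving in a slot can be actuated in that same slot). The genuine gap is in case \ref{theo:The Average AoA battery}: your premise that ``an actuation consumes a packet and the single stored energy unit, so the system is reset to $(0,0)$ after every actuation'' is false in this model. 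From state $(0,1)$ (battery charged, queue empty), if a data packet arrives and energy is harvested in the same slot (probability $\lambda_1\lambda_2$), an actuation occurs but the system returns to $(0,1)$, not $(0,0)$; this is exactly what the second row of $\mathbf{P}_s$ in Section \ref{Battery representing the controller} encodes ($(0,1)\to(0,0)$ only with probability $\lambda_1\bar\lambda_2$), and it is also why the state $(A,Q,B)=(1,0,1)$ carries positive probability $V_{1,0,1}$ in Appendix \ref{Proof of The Average AoA battery}. Consequently the inter-actuation times are not i.i.d.: the actuation epochs form a Markov renewal (semi-Markov) process with two possible post-actuation states, $(0,0)$ and $(0,1)$, and taking $f=g_{00}$ in $\bar A=1+f''(1)/(2f'(1))$ gives a wrong answer.

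The error is not merely cosmetic. At $\lambda_1=\lambda_2=\tfrac12$ your recipe yields $g_{00}'(1)=\tfrac83$, $g_{00}''(1)=\tfrac{64}{9}$ and hence $\bar A=\tfrac73$, whereas \eqref{Average_AoA_Battery} evaluates to $\tfrac{34}{15}$. The correct version of your approach must treat the cycle type: compute the first-passage PGFs from both $(0,0)$ and $(0,1)$, the embedded transition probabilities between post-actuation states (from $(0,1)$ the next post-state is $(0,1)$ with probability $\lambda_2$; from $(0,0)$ it is $(0,1)$ only if the absorbing actuation happens out of $(0,1)$ with a simultaneous harvest), and then average the cycle rewards $\mathbb{E}[T(T+1)/2]$ and lengths $\mathbb{E}[T]$ with the stationary distribution of this embedded two-state chain. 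Doing so at $\lambda_1=\lambda_2=\tfrac12$ (embedded weights $\tfrac34$ and $\tfrac14$) indeed recovers $\tfrac{34}{15}$, so the semi-Markov repair is viable, but as written your proof of part \ref{theo:The Average AoA battery} would produce a formula different from \eqref{Average_AoA_Battery}. The paper avoids this issue altogether by working with the three-dimensional chain $(A,Q,B)$ and summing $A$ against its stationary distribution.
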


\begin{proof}
See Appendices \ref{Proof of The Average AoA Infinite}, \ref{Proof of The Average AoA cache}, and \ref{Proof of The Average AoA battery}.    
\end{proof}

\begin{remark} \label{TheIntuituionForIAequality}
In scenarios Geo/Geo/1 with a controller under both policies of FCFS and LCFS, the intuition behind the independence of the average AoA from $\lambda_2$ and its equality to the average AoI is that each successfully received data packet resets the AoI to $1$ \textit{instantaneously} and resets AoA to $1$ \textit{ultimately}. Although, for the AoI to reset to $1$ only a received data packet is needed and for the AoA to reset to $1$ a permission is also required, with the assumption of stability, i.e. $\lambda_1 < \lambda_2$, there will be enough permissions from the controller to actuate all the data packets.
However, if the stability condition is violated, $\lambda_1 \geq \lambda_2$, we have $\bar{A}=\frac{1}{\lambda_2} $. Thus, generally $\bar{A}=\max\{\frac{1}{\lambda_1},\frac{1}{\lambda_2}\}$.
\end{remark}

\begin{remark}
In scenarios Geo/Geo/1 with a controller under both policies of FCFS and LCFS, the average AoA is the same for both policies since the age of the packet does not affect the AoA, and what matters is only the number of packets in the queue. However, this would not be valid for the average AoAI.
\end{remark}

\subsection{Age of Actuated Information (AoAI)}
AoAI measures the time elapsed since the generation of the freshest actuated data packet. We generally define it as $AI(t)=t-a_{AI}(t)+I_{p}(a_{AI}(t))$, where $a_{AI}(t)$ is the timestamp of the last actuation that has reset the AoAI (AoAI-resetting) and $I_{p}(a_{AI}(t))$ is the age of the actuated packet at the last AoAI-resetting actuation. In our system model, at each time slot $n$, upon an actuation, if the age of the actuated packet is lower\footnote{It is either greater or lower and cannot be equal since there cannot be two packets generated at the same time slot.} than the current AoAI\footnote{The only case that it does not generally happen is the scenario of Geo/Geo/1 under LCFS policy, since packets that become actuated might have been generated before the already actuated packets.}, the AoAI is reset to the age of the new actuated packet and increases by $1$ otherwise:
\begin{equation} \label{AoAI_Evoultion}
AI(n)=
\begin{cases}
\min\{AI(n-1)+1,I_p(n)\} & \alpha(n)=1,\\
AI(n-1)+1 & \alpha(n)=0,
\end{cases}
\end{equation}
where $I_{p}(n)$ is the age of the actuated packet at time slot $n$. 
\textit{This metric assesses the timeliness of actions in relation to the freshness of the actuated data packets.}

\begin{theorem} \label{theo:AoAI}
The closed-form expressions of the average AoAI for the four scenarios defined in the three cases of the system model are as follows:

\begin{enumerate}[label=(\roman*)]

\item \label{theo:The Average AoAI Infinite} The average AoAI for the Geo/Geo/1 \textit{under the policy of FCFS} is 
\begin{equation} \label{AoAI_average_InfCache}
\overline{AI}=\frac{\lambda_1^2 (\lambda_2-1) (\lambda_1 - \lambda_2) + \lambda_2^3( \lambda_1-1)}{\lambda_1 (\lambda_1 - \lambda_2) \lambda_2^2}.
\end{equation}

\item  \label{theo:The Average AoAI cache} The average AoAI for the Geo/Geo/1 with a controller \textit{under the policy of LCFC} and Geo/Geo/1/1 with a controller, both with pre-emption in service, is given by
\begin{equation}
\overline{AI}=\frac{1}{\lambda_1}+\frac{1}{\lambda_2}-1.
\end{equation}

\item  \label{theo:The Average AoAI battery} The average AoAI for the Geo/Geo/1/1 with a battery is given by  %(\ref{Average_AoAI_Battery}).
{\footnotesize
\begin{equation} \label{Average_AoAI_Battery}
\overline{AI}=\frac{\lambda_1^4 \left(\lambda_2-4\right) \left(\lambda_2-1\right)^3 \lambda_2 - 4 \lambda_1^3 \left( \lambda_2-1\right)^3 \lambda_2^2 + \lambda_1 \left(3 - 4 \lambda_2\right) \lambda_2^4 + \lambda_2^5 + \lambda_1^5 \left(\lambda_2-1\right)^3 \left(2 \lambda_2-1\right) + 2 \lambda_1^2 \lambda_2^3 \left(2 - 5 \lambda_2 + 3 \lambda_2^2\right)}{\lambda_1 \lambda_2 \left(\lambda_1 + \lambda_2 - \lambda_1 \lambda_2\right)^2 \left(\lambda_1^2 \left(\lambda_2-1\right)^2 + \lambda_2^2 + \lambda_1 \left(\lambda_2 - 2 \lambda_2^2\right)\right)}.
\vspace{-10pt}
\end{equation}}

\end{enumerate}

\end{theorem}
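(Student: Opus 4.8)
The plan is to route all four scenarios through one sawtooth identity for $AI(n)$ and then supply the scenario‑dependent stationary quantities. First I would check that in scenario \ref{theo:The Average AoAI Infinite}, the single‑buffer case of \ref{theo:The Average AoAI cache}, and scenario \ref{theo:The Average AoAI battery}, successive actuated packets carry strictly increasing timestamps --- under FCFS because packets leave in arrival order, and in the single‑buffer and battery cases because the stored packet is always fresher than the one consumed at the previous actuation --- so the carried‑over term $AI(n-1)+1$ always exceeds $I_p(n)$ and the $\min$ in \eqref{AoAI_Evoultion} never binds. Consequently every actuation resets $AI$ to $I_p$, while $AI$ grows by one between actuations, which gives the exact relation $AI(n)=A(n)+I_p\!\left(t_{k(n)}\right)-1$, with $t_{k(n)}$ the last actuation at or before $n$. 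Time‑averaging this identity, and evaluating the $D_{k+1}$‑weighted average of $I_p(t_{k(n)})$ by renewal--reward over the inter‑actuation cycles, yields
\begin{equation*}
\overline{AI}=\bar A-1+\frac{\mathbb{E}\!\left[D_{k+1}\,I_p(t_k)\right]}{\mathbb{E}\!\left[D_{k+1}\right]},
\end{equation*}
where $D_{k+1}$ is the inter‑actuation interval and $\bar A$ is the average AoA already obtained in Theorem~\ref{theo:AoA}. The task is thereby reduced to computing the time‑weighted mean age of actuated packets.

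For the Geo/Geo/1/1 with a controller and the Geo/Geo/1/1 with a battery, immediately after an actuation the system is back in its empty state ($Q=0$, and also $B=0$ in the battery case), so $I_p(t_k)$ is independent of the next cycle length $D_{k+1}$ and the ratio above collapses to $\mathbb{E}[I_p]$, the stationary expected age at an actuation. I would obtain $\mathbb{E}[I_p]$ and $\mathbb{E}[D]$ (the reciprocal of the stationary actuation probability) from the Markov chains of Section~\ref{SystemModel} refined by the age of the buffered packet; that refinement merely splits each ``full'' state into sub‑states whose masses decay geometrically and whose lumping returns the matrices $\mathbf{P}_s$, so $\mathbb{E}[I_p]$ is a geometric series. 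Substituting $\mathbb{E}[I_p]$ and the corresponding $\bar A$ and simplifying gives $\tfrac1{\lambda_1}+\tfrac1{\lambda_2}-1$ for the controller case and \eqref{Average_AoAI_Battery} for the battery case. For the Geo/Geo/1 under LCFS with preemption the $\min$ in \eqref{AoAI_Evoultion} does bind, since an actuation may consume a straggler older than the current reference; instead I would couple the arrival and permission streams and argue that the \emph{AoAI‑resetting} actuations occur at exactly the same slots, and consume exactly the same packets, as the actuations of the single‑buffer controller system --- after a reset the next reset is the first later permission for which a new packet has arrived in the interim, and it consumes the freshest such arrival, which is precisely the single‑buffer dynamics. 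Hence the $AI$ sample paths coincide and $\overline{AI}$ equals $\tfrac1{\lambda_1}+\tfrac1{\lambda_2}-1$, completing statement~\ref{theo:The Average AoAI cache}.

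The Geo/Geo/1 with a controller under FCFS is where the work concentrates, and I expect it to be the main obstacle: after an actuation the queue need not be empty, so $I_p(t_k)$ --- which equals the sojourn time $T_k$ of the departing packet --- is correlated with the next inter‑departure interval $D_{k+1}$, and the collapse used above fails. I would break the dependence by conditioning on the number of packets left behind at the departure: given that this is positive, $D_{k+1}$ is geometric with mean $1/\lambda_2$ and independent of $T_k$; given that it is zero, $D_{k+1}$ has mean $\tfrac1{\lambda_1}+\tfrac1{\lambda_2}-1$, again independent of $T_k$. This rewrites the time‑weighted mean age as a combination of $\mathbb{E}[T]$ and $\mathbb{E}\!\left[T\,\mathbf{1}\{Q(t_k)=0\}\right]$; since, conditionally on $T_k$, the number left behind is binomial in the arrivals that occur during the sojourn, both reduce to moments of the FCFS sojourn‑time law. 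To pin these down I would derive the joint stationary distribution of the queue length and the age of the head‑of‑line packet for this Geo/Geo/1 queue --- the closed‑form characterization announced in the abstract --- which gives the empty‑queue probability, $\mathbb{E}[T]$, and $\mathbb{E}[T\bar\lambda_1^{\,T-1}]$; feeding these into the displayed identity with $\bar A=1/\lambda_1$ produces \eqref{AoAI_average_InfCache}. The delicate points are solving this two‑dimensional chain in closed form and keeping the slot‑counting conventions consistent between the queue recursion, the binomial thinning, and the sojourn law; as sanity checks I would specialize all four formulas to $\lambda_1=1$, to $\lambda_2=1$, and to $\lambda_1=\lambda_2$, where each must collapse to a value obtainable by inspection.
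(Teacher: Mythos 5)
Your route is correct but genuinely different from the paper's. You prove first that in the FCFS, single-buffer, and battery scenarios every actuation resets $AI$ to the actuated packet's age (the $\min$ in \eqref{AoAI_Evoultion} always selects $I_p$), giving the exact sample-path identity $AI(n)=A(n)-1+I_p(t_{k(n)})$, and then evaluate the held term by renewal--reward, so that $\overline{AI}=\bar A-1+\mathbb{E}[D_{k+1}I_p(t_k)]/\mathbb{E}[D_{k+1}]$; the single-buffer and battery cases collapse to a Palm mean $\mathbb{E}[I_p]$ because the post-actuation state is always the empty one, the LCFS case is handled by coupling its AoAI-resetting actuations to the Geo/Geo/1/1 actuations (your characterization of the next reset as the first permission slot with an interim arrival, consuming the freshest such arrival, is exactly right), and the FCFS case is reduced, by conditioning on the number left behind (geometric continuation of mean $1/\lambda_2$ versus idle continuation of mean $1/\lambda_1+1/\lambda_2-1$), to the sojourn-time quantities $\mathbb{E}[T]$ and $\mathbb{E}[T\bar\lambda_1^{T-1}]$. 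The paper instead builds joint Markov chains containing the AoAI itself as a coordinate --- $(AI,C(AI))$ for FCFS, $(AI,I)$ for LCFS and the single buffer, $(AI,I,B)$ for the battery --- solves the stationary equations recursively, and sums the AoAI-weighted probabilities; that heavier route has the by-product of Theorem~\ref{theo:Closed_form}, which your argument does not automatically deliver (though your stated plan for FCFS would derive an equivalent joint law, and in fact only the sojourn-time transform, available via BASTA as a geometric mixture of negative binomials, is needed). Your reductions check out quantitatively: with the convention that the number left behind is Binomial over the $T-1$ slots following the arrival slot, the FCFS identity reproduces \eqref{AoAI_average_InfCache}, and $\bar A-1+\mathbb{E}[I_p]$ reproduces $1/\lambda_1+1/\lambda_2-1$ for the single-buffer case, so the only remaining work in your plan is the routine (if tedious) computation of the deferred Palm expectations, notably for the battery scenario.
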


\begin{proof}
See Appendices \ref{Proof of The Average AoAI Infinite}, \ref{Proof of The Average AoAI cache}, and \ref{Proof of The Average AoAI battery}.
\end{proof}

\begin{remark}
The AoAI is the same for the Geo/Geo/1 with a controller \textit{under the policy of LCFS} and the Geo/Geo/1/1 with a controller, both with preemption in service. This is because the AoAI in the LCFS scenario is reset only when a fresher data packet has been actuated, which is equivalent to having a single-size queue to store at most one freshest, not-yet-actuated packet at any time. This is only valid for AoAI, and these two scenarios differ in AoA, as seen in section \ref{AoASection}.
\end{remark}
\begin{remark}
For the Geo/Geo/1/1 with a controller, we can specify the general definition of the AoAI to $AI(t)=t-a(t)+I(a(t))=A(t)+I(a(t))$, where $I(a(t))$ represents the AoI at the time of the last actuation. Consequently, in our system model, for the Geo/Geo/1/1 with a controller, AoAI is reset to the current AoI upon an actuation and increases by $1$ otherwise. Thus, for this scenario, (\ref{AoAI_Evoultion}) is specified to 
\begin{equation}
AI(n)=
\begin{cases}
I(n) & \alpha(n)=1,\\
AI(n-1)+1 & \alpha(n)=0.
\end{cases}
\end{equation}
This is because, at any time, the actuation utilizes the freshest data packet available.
\end{remark}

\noindent
The results from Theorems \ref{theo:AoA} and \ref{theo:AoAI} are summarized in Table~\ref{tableofFormulas}.

\begin{table}
\centering
\caption{Average AoI, Average AoA, and Average AoAI for different scenarios.}

\label{tableofFormulas}
\begin{tabular}{c|
    >{\centering\arraybackslash}p{3.8cm}|
    >{\centering\arraybackslash}p{3.1cm}|
    >{\centering\arraybackslash}p{3.4cm}|
    >{\centering\arraybackslash}p{2.9cm}|}
\cline{2-5}
& Geo/Geo/1 \newline with a controller \newline under FCFS & Geo/Geo/1 \newline with a controller \newline under LCFS  & Geo/Geo/1/1 \newline with a controller & Geo/Geo/1/1 \newline with a battery   \\ \hline
\multicolumn{1}{|c|}{Average AoI}  &$\frac{1}{\lambda_1}$ &$\frac{1}{\lambda_1}$ &$\frac{1}{\lambda_1}$ &$\frac{1}{\lambda_1}$  \\ \hline
\multicolumn{1}{|c|}{Average AoA} &$\frac{1}{\lambda_1}$ & $\frac{1}{\lambda_1}$ &$\frac{ (\lambda_2-1) (\lambda_1^2+ \lambda_1 \lambda_2)   - \lambda_2^2}{\lambda_1 \lambda_2 \left(\lambda_1  ( \lambda_2-1) - \lambda_2\right) }$ &  (\ref{Average_AoA_Battery}) \\ \hline
\multicolumn{1}{|c|}{Average AoAI} &   $\frac{\lambda_1^2 (\lambda_2-1) (\lambda_1 - \lambda_2) + \lambda_2^3( \lambda_1-1)}{\lambda_1 (\lambda_1 - \lambda_2) \lambda_2^2}$ &$\frac{1}{\lambda_1}+\frac{1}{\lambda_2}-1$ &$\frac{1}{\lambda_1}+\frac{1}{\lambda_2}-1$ &(\ref{Average_AoAI_Battery}) \\ \hline
\end{tabular}
\end{table}

Through the stages of modeling and solving the Markov chain to reach (\ref{AoAI_average_InfCache}), we could obtain an insightful, novel, and fundamental result on queuing timeliness. It is presented in the following theorem.

\begin{definition}[Queue Markov Process and Queue State]\label{def:Queue_state}
We define a Markov process for the queue condition as $C(r)~=~\{\Phi_{r-1},\Phi_{r-2},\cdots,\Phi_{2},\Phi_{1}\}$ where $\Phi_i=0$ and $\Phi_i=1$ indicate the availability and unavailability of the packet with age $i$ in the queue, respectively, where $i \in \{1,\cdots,r-1\}$. We define $h=\max\{ i \in \{1,\cdots,r-1\}: \Phi_{i}=1\}$. Thus, we have $C(r)~=~\{\Phi_{r-1},\Phi_{r-2},\cdots,\Phi_{h+1},\Phi_{h},\Phi_{h-1},\cdots,\Phi_{2},\Phi_{1}\}$, where $\Phi_{h}=1$ and $\Phi_{i}=0 \  \forall i \in \{h+1,h+2,\cdots,r-1\}$. We have $r  \in \mathbb{N} , h \in \mathbb{N}_0
, r > h$. If $h \in \mathbb{N}$, then $h$ indicates the age of the oldest packet in a non-empty queue. If $h=0$, then $C$ is an empty queue. Also, $l=\sum_{i=1}^{h}\Phi_{i}$ is the length of the queue. 
We call $C$ a \textit{queue~Markov~process} and each distinct subset of $C$ with an arbitrary $h$ a \textit{queue~state}.
\end{definition}

\begin{theorem} \label{theo:Closed_form}
Consider a Geo/Geo/1, with arrival probability of $\lambda_1$ and service probability of $\lambda_2$, under the FCFS policy.
The stationary probability for this queue Markov process is only dependent on the length of the queue, i.e., $l$, and the age of the oldest packet in the queue (head of the queue since the policy is FCFS), i.e., $h$, and is described by %(\ref{Final_Closed_Form})
\begin{equation} \label{Final_Closed_Form}
\Gamma(h,l)=x^{l-1}z^{h-l}\Gamma_1+\sum_{i=1}^{l-1} \sum_{j=0}^{h-l} {l-1 \choose i} {h-l \choose j} w^{i} x^{l-i-1} y^{j} z^{h-l-j} f_{j+i}  + \sum_{j=1}^{h-l} {h-l \choose j} x^{l-1} y^{j} z^{h-l-j} f_{j}, 
\end{equation}
where $w=\lambda_1 \lambda_2, \ x=\lambda_1 \bar{\lambda}_2, \ y=\bar{\lambda}_1 \lambda_2, \ z=\bar{\lambda}_1 \bar{\lambda}_2 $, $D_{i}=(\frac{x}{y})^{i}(1-\frac{x}{y}) \ i \in \{0,1,2,\cdots\}$, $f_i=xD_i+wD_{i+1} \ i \in \{1,2,\cdots\}$, and $\Gamma_{1}=(\frac{1}{1-w})(xD_0+wzD_1+wyD_2)$.
\end{theorem}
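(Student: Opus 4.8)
The plan is to prove the two halves of the statement in turn: that the stationary mass of a queue state depends only on $(h,l)$, and that it equals the displayed expression. It helps to anticipate that, via $x+w=\lambda_1$, $y+z=\bar\lambda_1$, $x+z=\bar\lambda_2$ and $xy=wz$ (the last forcing $f_i$ to be geometric in $i$), (\ref{Final_Closed_Form}) collapses to the single-geometric form $\Gamma(h,l)=\pi_h\,\lambda_1^{l-1}\bar\lambda_1^{h-l}$ with $\pi_h=\Gamma_1(\bar\lambda_2/\bar\lambda_1)^{h-1}$, and that $\pi_h$ is nothing but the stationary probability that the head-of-line packet has age $h$. So I would target first the factorization over $(h,l)$ and then the head-of-line age law $\pi_h$.

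For the factorization I would re-encode a queue state by birth slots and use the FCFS discipline. The head-of-line packet was born $h$ slots ago, and since under FCFS no packet may depart before every older packet has, every arrival in the $h-1$ slots strictly younger than the head is still present; hence a queue state is in bijection with a pair $(h,A)$, $A\subseteq\{1,\dots,h-1\}$, $l=1+|A|$. The crux is that the event $E_h$ that the oldest packet at slot $n$ has age exactly $h$ is independent of the arrival coins $\Lambda_1(n-h+2),\dots,\Lambda_1(n)$: tracking the ``old sub-queue'' of packets born on or before slot $n-h+1$, a service token always removes the globally oldest packet, which whenever this sub-queue is non-empty is one of its members, so the old sub-queue evolves as a deterministic function of its own past, the single coin $\Lambda_1(n-h+1)$, and the service coins up to slot $n$, with no younger arrival coin entering; and $E_h$ is exactly the event that the old sub-queue's least birth slot is $n-h+1$. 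Since moreover no younger packet can be served while the head is present, $A$ is precisely the set of younger coins that landed on $1$, those being i.i.d.\ $\mathrm{Bernoulli}(\lambda_1)$ and independent of $E_h$. Therefore the stationary mass of $(h,A)$ is $P(E_h)\,\lambda_1^{|A|}\bar\lambda_1^{h-1-|A|}$, which depends only on $h$ and $l=|A|+1$ --- the first half --- and equals $\pi_h\,\lambda_1^{l-1}\bar\lambda_1^{h-l}$ with $\pi_h=P(E_h)$.

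For $\pi_h$ I would invoke FCFS once more to reduce to the standard stationary queue-length law. The queue-behind-the-head property gives that the number of packets of age $>k$ present at slot $n$ equals $\max\{Q(n-k)-S_k,0\}$, where $S_k\sim\mathrm{Bin}(k,\lambda_2)$ counts the service tokens in the intervening $k$ slots and is independent of $Q(n-k)$; hence $P(H\ge h)=P(Q>\mathrm{Bin}(h-1,\lambda_2))$. Substituting the Geo/Geo/1 steady-state law $P(Q=q)=(1-\sigma)\sigma^q$ with $\sigma=\lambda_1\bar\lambda_2/(\bar\lambda_1\lambda_2)$ (the admissible root of $\bar\lambda_1\lambda_2\,\sigma^2-(\lambda_1\bar\lambda_2+\bar\lambda_1\lambda_2)\,\sigma+\lambda_1\bar\lambda_2=0$, which comes from the one-dimensional queue-length balance) and summing the binomial yields $P(H\ge h)=\sigma\,(\bar\lambda_2/\bar\lambda_1)^{h-1}$, whence $\pi_h=\sigma(1-\bar\lambda_2/\bar\lambda_1)(\bar\lambda_2/\bar\lambda_1)^{h-1}$. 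Multiplying by $\lambda_1^{l-1}\bar\lambda_1^{h-l}$ and re-expressing through $w,x,y,z$, $D_i$, $f_i$, $\Gamma_1$ reproduces (\ref{Final_Closed_Form}).

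The step I expect to require real care is making these two combinatorial reductions watertight under the discrete-time timing convention of the model --- in particular the bookkeeping around a freshly arrived packet that may be actuated within its own slot, and the claim that the old sub-queue together with the count $S_k$ of effective service tokens genuinely decouples from the younger arrival coins --- along with the routine but lengthy algebra matching $\pi_h\,\lambda_1^{l-1}\bar\lambda_1^{h-l}$ to the stated double-binomial form. If one prefers to sidestep the probabilistic shortcut, an equally valid but more mechanical route is direct verification: take $\Gamma(h,l)$ as an ansatz, enumerate the few structured one-step predecessors of an arbitrary queue state (obtained by an arrival and/or a service), substitute into global balance, check the identity using $x+w=\lambda_1$, $y+z=\bar\lambda_1$, $xy=wz$, and finally verify non-negativity and that all the masses, together with the empty-state mass $1-\sigma$, sum to one whenever $\lambda_1<\lambda_2$.
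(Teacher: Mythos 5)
Your proposal is correct, but it proves the theorem by a genuinely different route than the paper. The paper stays entirely inside the Markov chain $(AI,C(AI))$: it writes the balance equations from the transition matrix, aggregates them into the quantities $\Gamma_{\gamma_h\cdots\gamma_1}$, $D_i$, $f_i$, computes $\Gamma_1,\Gamma_{10},\Gamma_{11},\Gamma_{100},\dots$ recursively, observes that states sharing $(h,l)$ carry equal mass, and reads off (\ref{Final_Closed_Form}) as the general pattern of those computations (the generalization step is essentially pattern identification). You instead argue on sample paths: under FCFS nothing behind the head can leave, so a queue state is the pair (head age $h$, subset $A$ of younger arrival slots); the event $\{H=h\}$ is measurable with respect to the older arrival coins and the service coins, hence independent of the $h-1$ younger arrival coins, which yields the product form $\Gamma(h,l)=\pi_h\,\lambda_1^{l-1}\bar{\lambda}_1^{h-l}$; and $\pi_h$ follows from binomial thinning of the stationary queue length, $P(H\ge h)=P\big(Q>\mathrm{Bin}(h-1,\lambda_2)\big)=\sigma\,(\bar{\lambda}_2/\bar{\lambda}_1)^{h-1}$ with $\sigma=x/y$, consistent with the paper's Geo/Geo/1 law. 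I checked the matching step you defer: extending $f_i=xD_i+wD_{i+1}$ to $i=0$ gives $f_0=(1-x/y)\lambda_1\bar{\lambda}_2/\bar{\lambda}_1$, the identity $\bar{\lambda}_1\,(1+x+x^2/z)=1-w$ gives $\Gamma_1=f_0$, and then two applications of the binomial theorem (using $y\,(x/y)=x$, $w\,(x/y)=x^2/z$, $x+z=\bar{\lambda}_2$) collapse (\ref{Final_Closed_Form}) to $f_0\,(\lambda_1\bar{\lambda}_2/\bar{\lambda}_1)^{l-1}\bar{\lambda}_2^{h-l}$, which is exactly your $\pi_h\lambda_1^{l-1}\bar{\lambda}_1^{h-l}$; so the two expressions agree. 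In terms of what each buys: the paper's derivation produces precisely the intermediate objects ($\Gamma_1$, the $f_i$, the per-$h$ aggregates) that it immediately reuses to compute $\overline{AI}$, but its final "general pattern" is by inspection; your argument proves the depends-only-on-$(h,l)$ claim rigorously and conceptually (independence of the head-age event from the younger coins, plus a head-of-line age law), and as a bonus exposes that the theorem's double-binomial expression is equivalent to a much simpler geometric product form with a clear probabilistic meaning. The only items to tighten are the ones you already flag: working in the stationary regime under $\lambda_1<\lambda_2$, and the same-slot service convention, which, as you note, disturbs neither the claim that all younger arrivals are still present nor the decoupling of the old sub-queue, since service always removes the globally oldest available packet.
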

\begin{proof}
See Appendix \ref{Proof of The Average AoAI Infinite}.
\end{proof}
\begin{corollary}
In a Geo/Geo/1, under the FCFS policy, for a given value of the age of the oldest packet, the stationary distribution of the queue Markov process (each queue state) only depends on the length of the queue and vice versa.
\end{corollary}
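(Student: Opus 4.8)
The plan is to obtain the corollary directly from Theorem~\ref{theo:Closed_form}. That theorem already establishes the substantive fact: the stationary probability of \emph{any} queue state equals $\Gamma(h,l)$, an expression that in (\ref{Final_Closed_Form}) is built only from the two integer parameters $h$ (age of the head-of-line packet) and $l$ (queue length) together with the fixed constants $w,x,y,z$ determined by $\lambda_1,\lambda_2$. No other attribute of the state --- in particular, \emph{which} of the $h-1$ younger age-slots are occupied, a choice that can be made in $\binom{h-1}{l-1}$ ways --- enters the formula. Granting this, the corollary is immediate: fixing $h$ makes the stationary probability the single-variable function $l\mapsto\Gamma(h,l)$, and fixing $l$ makes it $h\mapsto\Gamma(h,l)$; these are precisely the two directions of the statement. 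One may additionally confirm it by inspection of (\ref{Final_Closed_Form}), where for fixed $h$ only the exponents $l-1,\,h-l$ and the two binomial sums vary with $l$, and symmetrically for fixed $l$.

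If a self-contained argument is preferred over quoting the closed form, I would work at the level of the balance equations of the queue Markov process $C(r)$. First I would spell out the one-step dynamics: every occupied age increments, age $1$ becomes occupied with probability $\lambda_1$, and when the queue is nonempty its oldest packet departs with probability $\lambda_2$, with the boundary convention that a fresh arrival into an empty queue is actuated within the same slot. The crucial observation is that the one-step transition probabilities out of a state depend only on whether the queue is empty and on $\lambda_1,\lambda_2$, not on the interior occupancy pattern; likewise the predecessors of a state with parameters $(h,l)$ fall into finitely many classes, each specified by a head age (either $h-1$, or --- in the service cases --- some value $\ge h$) and a length, so the predecessor structure too depends only on $(h,l)$. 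I would then substitute the ansatz $\pi(\cdot)=g(h,l)$ into the balance equations, check that all states sharing $(h,l)$ yield the same equation, and solve the resulting recursion in $(h,l)$ by induction on $h$ --- anchored at the empty-queue state and with $l$ running over $1,\dots,h$ --- thereby recovering (\ref{Final_Closed_Form}) and, in passing, the consistency of the constant-on-$(h,l)$ solution.

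The main obstacle in that second route is purely bookkeeping: the head-of-line discipline makes the predecessor relation non-local in $h$ (a served head may be arbitrarily older than the new head, leaving a gap), so the balance recursion is not nearest-neighbor, and the binomial coefficients $\binom{l-1}{i}\binom{h-l}{j}$ in (\ref{Final_Closed_Form}) appear precisely when one counts the interleavings of the ``arrival-with-service'' slots and the ``idle'' slots among the $l-1$ and $h-l$ relevant slots, with additional care at the empty-queue boundary and the within-slot actuation rule. For the corollary itself, however, none of this is needed: once Theorem~\ref{theo:Closed_form} has shown $\Gamma$ to be a function of $(h,l)$ alone, fixing one coordinate leaves dependence on the other only, which is exactly the claim.
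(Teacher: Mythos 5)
Your proposal is correct and matches the paper's treatment: the corollary is stated as an immediate consequence of Theorem~\ref{theo:Closed_form}, whose expression (\ref{Final_Closed_Form}) involves only $h$ and $l$ (besides the constants $w,x,y,z$), so fixing one coordinate leaves dependence only on the other. The additional self-contained balance-equation sketch you outline is not needed and is not what the paper does, but your primary argument is exactly the intended one.
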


\begin{figure}[h!]
\centering 
\begin{subfigure}[b]{0.9\linewidth}
    \centering
\scalebox{0.8}{ \boldmath{
\begin{tikzpicture}

{
    \pgfdeclarepatternformonly{Sparse Vertical Lines}
    {%
        \pgfqpoint{-1pt}{-1pt}%
    }
    {%
        \pgfqpoint{10pt}{10pt}%
    }
    {%
        \pgfqpoint{9pt}{9pt}%
    }
    {
        \pgfsetlinewidth{0.1pt} % Thick lines
        \pgfpathmoveto{\pgfqpoint{0pt}{0pt}}
        \pgfpathlineto{\pgfqpoint{0pt}{9.1pt}}
        \pgfusepath{stroke}
    }
}

{
    \pgfdeclarepatternformonly{Sparse Horizontal Lines}
    {%
        \pgfqpoint{-1pt}{-1pt}%
    }
    {%
        \pgfqpoint{10pt}{10pt}%
    }
    {%
        \pgfqpoint{9pt}{9pt}%
    }
    {
        \pgfsetlinewidth{0.1pt} % Thick lines
        \pgfpathmoveto{\pgfqpoint{0pt}{0pt}}
        \pgfpathlineto{\pgfqpoint{9.1pt}{0pt}}
        \pgfusepath{stroke}
    }
    }

{
    \pgfdeclarepatternformonly{Sparse North East Lines}
    {%
        \pgfqpoint{-1pt}{-1pt}%
    }
    {%
        \pgfqpoint{10pt}{10pt}%
    }
    {%
        \pgfqpoint{9pt}{9pt}%
    }
    {
        \pgfsetlinewidth{0.1pt} % Normal line thickness
        \pgfpathmoveto{\pgfqpoint{0pt}{0pt}}
        \pgfpathlineto{\pgfqpoint{9.1pt}{9.1pt}}
        \pgfusepath{stroke}
    }
    }

    \pgfdeclarepatternformonly{Vertical Lines}
    {%
        \pgfqpoint{-1pt}{-1pt}%
    }
    {%
        \pgfqpoint{4pt}{4pt}%
    }
    {%
        \pgfqpoint{3pt}{3pt}%
    }
    {
        \pgfsetlinewidth{0.1pt} % Thick lines
        \pgfpathmoveto{\pgfqpoint{0pt}{0pt}}
        \pgfpathlineto{\pgfqpoint{0pt}{3.1pt}}
        \pgfusepath{stroke}
    }

    \pgfdeclarepatternformonly{Horizontal Lines}
    {%
        \pgfqpoint{-1pt}{-1pt}%
    }
    {%
        \pgfqpoint{4pt}{4pt}%
    }
    {%
        \pgfqpoint{3pt}{3pt}%
    }
    {
        \pgfsetlinewidth{0.1pt} % Thick lines
        \pgfpathmoveto{\pgfqpoint{0pt}{0pt}}
        \pgfpathlineto{\pgfqpoint{3.1pt}{0pt}}
        \pgfusepath{stroke}
    }

    \pgfdeclarepatternformonly{North East Lines}
    {%
        \pgfqpoint{-1pt}{-1pt}%
    }
    {%
        \pgfqpoint{4pt}{4pt}%
    }
    {%
        \pgfqpoint{3pt}{3pt}%
    }
    {
        \pgfsetlinewidth{0.1pt} % Thick lines
        \pgfpathmoveto{\pgfqpoint{0pt}{0pt}}
        \pgfpathlineto{\pgfqpoint{3.1pt}{3.1pt}}
        \pgfusepath{stroke}
    }

\fill[pattern=Sparse Vertical Lines] (0,0) rectangle (1,1);
\fill[pattern=Sparse Vertical Lines] (1,0) rectangle (2,2);
\fill[pattern=Sparse Vertical Lines] (2,0) rectangle (3,1);
\fill[pattern=Sparse Vertical Lines] (3,0) rectangle (4,2);
\fill[pattern=Sparse Vertical Lines] (4,0) rectangle (5,1);
\fill[pattern=Sparse Vertical Lines] (5,0) rectangle (6,2);
\fill[pattern=Sparse Vertical Lines] (6,0) rectangle (7,3);
\fill[pattern=Sparse Vertical Lines] (7,0) rectangle (8,4);

\draw[line width=4pt, green]  (0,1) -- (1,1) -- (1,2) -- (2,2) -- (2,1) -- (3,1) -- (3,2) -- (4,2) -- (4,1) -- (5,1) -- (5,2) -- (6,2) -- (6,3) -- (7,3) -- (7,4) -- (8,4);

\fill[pattern=Sparse Horizontal Lines] (0,0) rectangle (1,1);
\fill[pattern=Sparse Horizontal Lines] (1,0) rectangle (2,2);
\fill[pattern=Sparse Horizontal Lines] (2,0) rectangle (3,3);
\fill[pattern=Sparse Horizontal Lines] (3,0) rectangle (4,4);
\fill[pattern=Sparse Horizontal Lines] (4,0) rectangle (5,5);
\fill[pattern=Sparse Horizontal Lines] (5,0) rectangle (6,1);
\fill[pattern=Sparse Horizontal Lines] (6,0) rectangle (7,2);
\fill[pattern=Sparse Horizontal Lines] (7,0) rectangle (8,1);

\draw[line width=2pt, blue]  (0,1) -- (1,1) -- (1,2) -- (2,2) -- (2,3) -- (3,3) -- (3,4) -- (4,4) -- (4,5) -- (5,5) -- (5,1) -- (6,1) -- (6,2) -- (7,2) -- (7,1) -- (8,1) ;

\fill[pattern=Sparse North East Lines] (0,0) rectangle (1,1);
\fill[pattern=Sparse North East Lines] (1,0) rectangle (2,2);
\fill[pattern=Sparse North East Lines] (2,0) rectangle (3,3);
\fill[pattern=Sparse North East Lines] (3,0) rectangle (4,4);
\fill[pattern=Sparse North East Lines] (4,0) rectangle (5,5);
\fill[pattern=Sparse North East Lines] (5,0) rectangle (6,4);
\fill[pattern=Sparse North East Lines] (6,0) rectangle (7,5);
\fill[pattern=Sparse North East Lines] (7,0) rectangle (8,4);

\draw[line width=0.8pt, red] (0,1) -- (1,1) -- (1,2) -- (2,2) -- (2,3) -- (3,3) -- (3,4) -- (4,4) -- (4,5) -- (5,5) -- (5,4) -- (6,4) -- (6,5) -- (7,5) -- (7,4) -- (8,4);

\draw[-Stealth, very thick ] (0,0) -- (9,0)  node [xshift=-0.2cm, yshift=0.4cm] {$n$} node [xshift=-9cm, yshift=-0.4cm] {$0$}  node [xshift=-8cm, yshift=-0.4cm] {$1$}  node [xshift=-7cm, yshift=-0.4cm] {$2$}  node [xshift=-6cm, yshift=-0.4cm] {$3$}  node [xshift=-5cm, yshift=-0.4cm] {$4$}  node [xshift=-4cm, yshift=-0.4cm] {$5$}  node [xshift=-3cm, yshift=-0.4cm] {$6$}  node [xshift=-2cm, yshift=-0.4cm] {$7$}  node [xshift=-1cm, yshift=-0.4cm] {$8$}  ;

\draw[-Stealth, very thick ] (0,0) -- (0,6) node [xshift=0.7cm, yshift=-0.2cm] {$AI(n)$}  node [xshift=0.7cm, yshift=-0.7cm] {$A(n)$} node [xshift=0.7cm, yshift=-1.2cm] {$I(n)$} node [xshift=-0.3cm, yshift=-5cm] {$1$} node [xshift=-0.3cm, yshift=-6cm] {$0$};

\draw (1,-0.1) -- (1,0.1);
\draw (2,-0.1) -- (2,0.1);
\draw (3,-0.1) -- (3,0.1);
\draw (4,-0.1) -- (4,0.1);
\draw (5,-0.1) -- (5,0.1);
\draw (6,-0.1) -- (6,0.1);
\draw (7,-0.1) -- (7,0.1);
\draw (8,-0.1) -- (8,0.1);

\draw (-0.1,1) -- (0.1,1);
\draw (-0.1,2) -- (0.1,2);
\draw (-0.1,3) -- (0.1,3);
\draw (-0.1,4) -- (0.1,4);
\draw (-0.1,5) -- (0.1,5);

\node[xshift=2cm,yshift=5.6cm] {AoAI};
\node[xshift=2cm,yshift=5.1cm] {AoA};
\node[xshift=2cm,yshift=4.6cm] {AoI};

\fill[pattern=North East Lines] (3.5-1,6.3-0.5-0.4) rectangle (4.5-1,6.2-0.4);
\fill[pattern=Horizontal Lines] (3.5-1,5.3-0.4) rectangle (4.5-1,6.2-0.5-0.4);
\fill[pattern=Vertical Lines] (3.5-1,5.3-0.5-0.4) rectangle (4.5-1,5.2-0.4);

\draw[line width = 2pt, red] (3.5-1,6.2-0.4) -- (4.5-1,6.2-0.4);
\draw[line width = 2pt, blue]  (3.5-1,6.2-0.5-0.4) rectangle (4.5-1,6.2-0.5-0.4);
\draw[line width = 2pt, green] (3.5-1,5.2-0.4) rectangle (4.5-1,5.2-0.4);

\draw (1.5,4.7-0.4) rectangle (3.7,6.3-0.4);

\end{tikzpicture}}}
\caption{Geo/Geo/1 under FCFS policy with a controller.}
\label{fig:SamplePathInfiniteControllerFCFS}
\vspace{10pt}
\end{subfigure}

\begin{subfigure}[b]{0.9\linewidth}
    \centering
\scalebox{0.8}{ \boldmath{
\begin{tikzpicture}

\fill[pattern=Sparse Vertical Lines] (0,0) rectangle (1,1);
\fill[pattern=Sparse Vertical Lines] (1,0) rectangle (2,2);
\fill[pattern=Sparse Vertical Lines] (2,0) rectangle (3,1);
\fill[pattern=Sparse Vertical Lines] (3,0) rectangle (4,2);
\fill[pattern=Sparse Vertical Lines] (4,0) rectangle (5,1);
\fill[pattern=Sparse Vertical Lines] (5,0) rectangle (6,2);
\fill[pattern=Sparse Vertical Lines] (6,0) rectangle (7,3);
\fill[pattern=Sparse Vertical Lines] (7,0) rectangle (8,4);

\draw[line width=4pt, green]  (0,1) -- (1,1) -- (1,2) -- (2,2) -- (2,1) -- (3,1) -- (3,2) -- (4,2) -- (4,1) -- (5,1) -- (5,2) -- (6,2) -- (6,3) -- (7,3) -- (7,4) -- (8,4);

\fill[pattern=Sparse Horizontal Lines] (0,0) rectangle (1,1);
\fill[pattern=Sparse Horizontal Lines] (1,0) rectangle (2,2);
\fill[pattern=Sparse Horizontal Lines] (2,0) rectangle (3,3);
\fill[pattern=Sparse Horizontal Lines] (3,0) rectangle (4,4);
\fill[pattern=Sparse Horizontal Lines] (4,0) rectangle (5,5);
\fill[pattern=Sparse Horizontal Lines] (5,0) rectangle (6,1);
\fill[pattern=Sparse Horizontal Lines] (6,0) rectangle (7,2);
\fill[pattern=Sparse Horizontal Lines] (7,0) rectangle (8,1);

\draw[line width=2pt, blue]  (0,1) -- (1,1) -- (1,2) -- (2,2) -- (2,3) -- (3,3) -- (3,4) -- (4,4) -- (4,5) -- (5,5) -- (5,1) -- (6,1) -- (6,2) -- (7,2) -- (7,1) -- (8,1) ;

\fill[pattern=Sparse North East Lines] (0,0) rectangle (1,1);
\fill[pattern=Sparse North East Lines] (1,0) rectangle (2,2);
\fill[pattern=Sparse North East Lines] (2,0) rectangle (3,3);
\fill[pattern=Sparse North East Lines] (3,0) rectangle (4,4);
\fill[pattern=Sparse North East Lines] (4,0) rectangle (5,5);
\fill[pattern=Sparse North East Lines] (5,0) rectangle (6,2);
\fill[pattern=Sparse North East Lines] (6,0) rectangle (7,3);
\fill[pattern=Sparse North East Lines] (7,0) rectangle (8,4);

\draw[line width=0.8pt, red] (0,1) -- (1,1) -- (1,2) -- (2,2) -- (2,3) -- (3,3) -- (3,4) -- (4,4) -- (4,5) -- (5,5) -- (5,2) -- (6,2) -- (6,3) -- (7,3) -- (7,4) -- (8,4);

\draw[-Stealth, very thick ] (0,0) -- (9,0)  node [xshift=-0.2cm, yshift=0.4cm] {$n$} node [xshift=-9cm, yshift=-0.4cm] {$0$}  node [xshift=-8cm, yshift=-0.4cm] {$1$}  node [xshift=-7cm, yshift=-0.4cm] {$2$}  node [xshift=-6cm, yshift=-0.4cm] {$3$}  node [xshift=-5cm, yshift=-0.4cm] {$4$}  node [xshift=-4cm, yshift=-0.4cm] {$5$}  node [xshift=-3cm, yshift=-0.4cm] {$6$}  node [xshift=-2cm, yshift=-0.4cm] {$7$}  node [xshift=-1cm, yshift=-0.4cm] {$8$}  ;

\draw[-Stealth, very thick ] (0,0) -- (0,6) node [xshift=0.7cm, yshift=-0.2cm] {$AI(n)$}  node [xshift=0.7cm, yshift=-0.7cm] {$A(n)$} node [xshift=0.7cm, yshift=-1.2cm] {$I(n)$} node [xshift=-0.3cm, yshift=-5cm] {$1$} node [xshift=-0.3cm, yshift=-6cm] {$0$};

\draw (1,-0.1) -- (1,0.1);
\draw (2,-0.1) -- (2,0.1);
\draw (3,-0.1) -- (3,0.1);
\draw (4,-0.1) -- (4,0.1);
\draw (5,-0.1) -- (5,0.1);
\draw (6,-0.1) -- (6,0.1);
\draw (7,-0.1) -- (7,0.1);
\draw (8,-0.1) -- (8,0.1);

\draw (-0.1,1) -- (0.1,1);
\draw (-0.1,2) -- (0.1,2);
\draw (-0.1,3) -- (0.1,3);
\draw (-0.1,4) -- (0.1,4);
\draw (-0.1,5) -- (0.1,5);

\node[xshift=2cm,yshift=5.6cm] {AoAI};
\node[xshift=2cm,yshift=5.1cm] {AoA};
\node[xshift=2cm,yshift=4.6cm] {AoI};

\fill[pattern=North East Lines] (3.5-1,6.3-0.5-0.4) rectangle (4.5-1,6.2-0.4);
\fill[pattern=Horizontal Lines] (3.5-1,5.3-0.4) rectangle (4.5-1,6.2-0.5-0.4);
\fill[pattern=Vertical Lines] (3.5-1,5.3-0.5-0.4) rectangle (4.5-1,5.2-0.4);

\draw[line width = 2pt, red] (3.5-1,6.2-0.4) -- (4.5-1,6.2-0.4);
\draw[line width = 2pt, blue]  (3.5-1,6.2-0.5-0.4) rectangle (4.5-1,6.2-0.5-0.4);
\draw[line width = 2pt, green] (3.5-1,5.2-0.4) rectangle (4.5-1,5.2-0.4);

\draw (1.5,4.7-0.4) rectangle (3.7,6.3-0.4);

\end{tikzpicture}}}
\caption{Geo/Geo/1 under LCFS policy with a controller.}
\label{fig:SamplePathInfiniteControllerLCFS}
\vspace{10pt}
\end{subfigure}

\begin{subfigure}[b]{0.9\linewidth}
\centering 
\scalebox{0.8}{ \boldmath{
\begin{tikzpicture}

\fill[pattern=Sparse Vertical Lines] (0,0) rectangle (1,1);
\fill[pattern=Sparse Vertical Lines] (1,0) rectangle (2,2);
\fill[pattern=Sparse Vertical Lines] (2,0) rectangle (3,1);
\fill[pattern=Sparse Vertical Lines] (3,0) rectangle (4,2);
\fill[pattern=Sparse Vertical Lines] (4,0) rectangle (5,1);
\fill[pattern=Sparse Vertical Lines] (5,0) rectangle (6,2);
\fill[pattern=Sparse Vertical Lines] (6,0) rectangle (7,3);
\fill[pattern=Sparse Vertical Lines] (7,0) rectangle (8,4);

\draw[line width=4pt, green]  (0,1) -- (1,1) -- (1,2) -- (2,2) -- (2,1) -- (3,1) -- (3,2) -- (4,2) -- (4,1) -- (5,1) -- (5,2) -- (6,2) -- (6,3) -- (7,3) -- (7,4) -- (8,4);

\fill[pattern=Sparse Horizontal Lines] (0,0) rectangle (1,1);
\fill[pattern=Sparse Horizontal Lines] (1,0) rectangle (2,2);
\fill[pattern=Sparse Horizontal Lines] (2,0) rectangle (3,3);
\fill[pattern=Sparse Horizontal Lines] (3,0) rectangle (4,4);
\fill[pattern=Sparse Horizontal Lines] (4,0) rectangle (5,5);
\fill[pattern=Sparse Horizontal Lines] (5,0) rectangle (6,1);
\fill[pattern=Sparse Horizontal Lines] (6,0) rectangle (7,2);
\fill[pattern=Sparse Horizontal Lines] (7,0) rectangle (8,3);

\draw[line width=2pt, blue]  (0,1) -- (1,1) -- (1,2) -- (2,2) -- (2,3) -- (3,3) -- (3,4) -- (4,4) -- (4,5) -- (5,5) -- (5,1) -- (6,1) -- (6,2) -- (7,2) -- (7,3) -- (8,3) ;

\fill[pattern=Sparse North East Lines] (0,0) rectangle (1,1);
\fill[pattern=Sparse North East Lines] (1,0) rectangle (2,2);
\fill[pattern=Sparse North East Lines] (2,0) rectangle (3,3);
\fill[pattern=Sparse North East Lines] (3,0) rectangle (4,4);
\fill[pattern=Sparse North East Lines] (4,0) rectangle (5,5);
\fill[pattern=Sparse North East Lines] (5,0) rectangle (6,2);
\fill[pattern=Sparse North East Lines] (6,0) rectangle (7,3);
\fill[pattern=Sparse North East Lines] (7,0) rectangle (8,4);

\draw[line width=0.8pt, red] (0,1) -- (1,1) -- (1,2) -- (2,2) -- (2,3) -- (3,3) -- (3,4) -- (4,4) -- (4,5) -- (5,5) -- (5,2) -- (6,2) -- (6,3) -- (7,3) -- (7,4) -- (8,4);

\draw[-Stealth, very thick ] (0,0) -- (9,0)  node [xshift=-0.2cm, yshift=0.4cm] {$n$} node [xshift=-9cm, yshift=-0.4cm] {$0$}  node [xshift=-8cm, yshift=-0.4cm] {$1$}  node [xshift=-7cm, yshift=-0.4cm] {$2$}  node [xshift=-6cm, yshift=-0.4cm] {$3$}  node [xshift=-5cm, yshift=-0.4cm] {$4$}  node [xshift=-4cm, yshift=-0.4cm] {$5$}  node [xshift=-3cm, yshift=-0.4cm] {$6$}  node [xshift=-2cm, yshift=-0.4cm] {$7$}  node [xshift=-1cm, yshift=-0.4cm] {$8$} ;

\draw[-Stealth, very thick ] (0,0) -- (0,6) node [xshift=0.7cm, yshift=-0.2cm] {$AI(n)$}  node [xshift=0.7cm, yshift=-0.7cm] {$A(n)$} node [xshift=0.7cm, yshift=-1.2cm] {$I(n)$} node [xshift=-0.3cm, yshift=-5cm] {$1$} node [xshift=-0.3cm, yshift=-6cm] {$0$};

\draw (1,-0.1) -- (1,0.1);
\draw (2,-0.1) -- (2,0.1);
\draw (3,-0.1) -- (3,0.1);
\draw (4,-0.1) -- (4,0.1);
\draw (5,-0.1) -- (5,0.1);
\draw (6,-0.1) -- (6,0.1);
\draw (7,-0.1) -- (7,0.1);
\draw (8,-0.1) -- (8,0.1);

\draw (-0.1,1) -- (0.1,1);
\draw (-0.1,2) -- (0.1,2);
\draw (-0.1,3) -- (0.1,3);
\draw (-0.1,4) -- (0.1,4);
\draw (-0.1,5) -- (0.1,5);

\node[xshift=6cm,yshift=5.6cm] {AoAI};
\node[xshift=6cm,yshift=5.1cm] {AoA};
\node[xshift=6cm,yshift=4.6cm] {AoI};

\fill[pattern=North East Lines] (3.5+3,6.3-0.5-0.4) rectangle (4.5+3,6.2-0.4);
\fill[pattern=Horizontal Lines] (3.5+3,5.3-0.4) rectangle (4.5+3,6.2-0.5-0.4);
\fill[pattern=Vertical Lines] (3.5+3,5.3-0.5-0.4) rectangle (4.5+3,5.2-0.4);

\draw[line width = 2pt, red] (3.5+3,6.2-0.4) -- (4.5+3,6.2-0.4);
\draw[line width = 2pt, blue]  (3.5+3,6.2-0.5-0.4) rectangle (4.5+3,6.2-0.5-0.4);
\draw[line width = 2pt, green] (3.5+3,5.2-0.4) rectangle (4.5+3,5.2-0.4);

\draw (1.5+4,4.7-0.4) rectangle (3.7+4,6.3-0.4);

\end{tikzpicture}}} 
\caption{Geo/Geo/1/1 with a controller.}
\label{fig:SamplePathBufferController}
\end{subfigure}

\begin{subfigure}[b]{0.9\linewidth}

\centering

\caption{The events and processes.}
\begin{tabular}{|c|c|c|c|c|c|c|c|}
\hline
$n$ & 1 & 2 & 3 & 4 & 5 & 6 & 7  \\ \hline
$\Lambda_1(n)$  & 0 & 1 & 0 & 1 & 0 & 0 & 0  \\ \hline
$\Lambda_2(n)$ & 0 & 0 & 0 & 0 & 1 & 0 & 1  \\ \hline
$Q_{\infty}(n)$ & 0 & 1 & 1 & 2 & 1 & 1 & 0   \\ \hline
$Q_1(n)$ & 0 & 1 & 1 & 1 & 0 & 0 & 0   \\ \hline
\end{tabular}
\label{Table:SamplePathInfiniteControllerFCFS}
\end{subfigure}

\caption{Sample paths and their corresponding events and processes for the concurrent evolution of the AoI, AoA, and AoAI, for the three scenarios mentioned on each figure.}
\end{figure}

\begin{figure}[h!]
\centering
\begin{subfigure}[b]{0.9\linewidth}
\centering 
\scalebox{0.8}{ \boldmath{
\begin{tikzpicture}

\fill[pattern=Sparse Vertical Lines] (0,0) rectangle (1,1);
\fill[pattern=Sparse Vertical Lines] (1,0) rectangle (2,2);
\fill[pattern=Sparse Vertical Lines] (2,0) rectangle (3,1);
\fill[pattern=Sparse Vertical Lines] (3,0) rectangle (4,2);
\fill[pattern=Sparse Vertical Lines] (4,0) rectangle (5,3);
\fill[pattern=Sparse Vertical Lines] (5,0) rectangle (6,4);
\fill[pattern=Sparse Vertical Lines] (6,0) rectangle (7,1);
\fill[pattern=Sparse Vertical Lines] (7,0) rectangle (8,2);

\draw[line width=4pt, green]  (0,1) -- (1,1) -- (1,2) -- (2,2) -- (2,1) -- (3,1) -- (3,2) -- (4,2) -- (4,3) -- (5,3) -- (5,4) -- (6,4) -- (6,1) -- (7,1) -- (7,2) -- (8,2);

\fill[pattern=Sparse Horizontal Lines] (0,0) rectangle (1,1);
\fill[pattern=Sparse Horizontal Lines] (1,0) rectangle (2,2);
\fill[pattern=Sparse Horizontal Lines] (2,0) rectangle (3,3);
\fill[pattern=Sparse Horizontal Lines] (3,0) rectangle (4,4);
\fill[pattern=Sparse Horizontal Lines] (4,0) rectangle (5,1);
\fill[pattern=Sparse Horizontal Lines] (5,0) rectangle (6,2);
\fill[pattern=Sparse Horizontal Lines] (6,0) rectangle (7,1);
\fill[pattern=Sparse Horizontal Lines] (7,0) rectangle (8,2);

\draw[line width=2pt, blue]  (0,1) -- (1,1) -- (1,2) -- (2,2) -- (2,3) -- (3,3) -- (3,4) -- (4,4) -- (4,1) -- (5,1) -- (5,2) -- (6,2) -- (6,1) -- (7,1) -- (7,2) -- (8,2);

\fill[pattern=Sparse North East Lines] (0,0) rectangle (1,1);
\fill[pattern=Sparse North East Lines] (1,0) rectangle (2,2);
\fill[pattern=Sparse North East Lines] (2,0) rectangle (3,3);
\fill[pattern=Sparse North East Lines] (3,0) rectangle (4,4);
\fill[pattern=Sparse North East Lines] (4,0) rectangle (5,3);
\fill[pattern=Sparse North East Lines] (5,0) rectangle (6,4);
\fill[pattern=Sparse North East Lines] (6,0) rectangle (7,1);
\fill[pattern=Sparse North East Lines] (7,0) rectangle (8,2);

\draw[line width=0.8pt, red] (0,1) -- (1,1) -- (1,2) -- (2,2) -- (2,3) -- (3,3) -- (3,4) -- (4,4) -- (4,3) -- (5,3) -- (5,4) -- (6,4) -- (6,1) -- (7,1) -- (7,2) -- (8,2);

\draw[-Stealth, very thick ] (0,0) -- (9,0)  node [xshift=-0.2cm, yshift=0.4cm] {$n$} node [xshift=-9cm, yshift=-0.4cm] {$0$}  node [xshift=-8cm, yshift=-0.4cm] {$1$}  node [xshift=-7cm, yshift=-0.4cm] {$2$}  node [xshift=-6cm, yshift=-0.4cm] {$3$}  node [xshift=-5cm, yshift=-0.4cm] {$4$}  node [xshift=-4cm, yshift=-0.4cm] {$5$}  node [xshift=-3cm, yshift=-0.4cm] {$6$}  node [xshift=-2cm, yshift=-0.4cm] {$7$}  node [xshift=-1cm, yshift=-0.4cm] {$8$};

\draw[-Stealth, very thick ] (0,0) -- (0,6) node [xshift=0.7cm, yshift=-0.2cm] {$AI(n)$}  node [xshift=0.7cm, yshift=-0.7cm] {$A(n)$} node [xshift=0.7cm, yshift=-1.2cm] {$I(n)$} node [xshift=-0.3cm, yshift=-5cm] {$1$} node [xshift=-0.3cm, yshift=-6cm] {$0$};

\draw (1,-0.1) -- (1,0.1);
\draw (2,-0.1) -- (2,0.1);
\draw (3,-0.1) -- (3,0.1);
\draw (4,-0.1) -- (4,0.1);
\draw (5,-0.1) -- (5,0.1);
\draw (6,-0.1) -- (6,0.1);
\draw (7,-0.1) -- (7,0.1);
\draw (8,-0.1) -- (8,0.1);

\draw (-0.1,1) -- (0.1,1);
\draw (-0.1,2) -- (0.1,2);
\draw (-0.1,3) -- (0.1,3);
\draw (-0.1,4) -- (0.1,4);
\draw (-0.1,5) -- (0.1,5);

\node[xshift=3cm,yshift=5.5cm] {AoAI};
\node[xshift=3cm,yshift=5cm] {AoA};
\node[xshift=3cm,yshift=4.5cm] {AoI};

\fill[pattern=North East Lines] (3.5,6.3-1) rectangle (4.5,6.7-1);
\fill[pattern=Horizontal Lines] (3.5,5.8-1) rectangle (4.5,6.2-1);
\fill[pattern=Vertical Lines] (3.5,5.3-1) rectangle (4.5,5.7-1);

\draw[line width = 2pt, red] (3.5,6.7-1) -- (4.5,6.7-1);
\draw[line width = 2pt, blue]  (3.5,6.2-1) rectangle (4.5,6.2-1);
\draw[line width = 2pt, green] (3.5,5.7-1) rectangle (4.5,5.7-1);

\draw (2.5,5.2-1) rectangle (4.7,6.8-1);

\end{tikzpicture}}} 
\caption{Geo/Geo/1/1 with a battery.}
\label{fig:SamplePathBufferBattery}
\end{subfigure}

\begin{subfigure}[b]{0.9\linewidth}
\centering
\caption{The events and processes.}
\label{Table:SamplePathBufferBattery}
\renewcommand{\arraystretch}{1} 
\begin{tabular}{|c|c|c|c|c|c|c|c|}
\cline{1-8}
$n$ & $1$ & $2$ & $3$ & $4$ & $5$ & $6$ & $7$ \\ \hline
$\Lambda_1(n)$  &0 &1 &0 &0 &0 &1 &0 \\ \hline
$\Lambda_2(n)$ &0 &0 &0 &1 &1 &0 &0 \\ \hline
$Q_{B}(n)$ &0 &1 &1 &0 &0 &0 &0 \\ \hline
$B(n)$ &0 &0 &0 &0 &1 &0 &0 \\ \hline
\end{tabular}
\end{subfigure}
\caption{A sample path and its corresponding events and processes for the concurrent evolution of the AoI, AoA, and AoAI for the Geo/Geo/1/1 with a battery.}
\end{figure}

\section{The relations between the AoI, AoA, and AoAI} 

In this section, we analyze the evolution of AoI, AoA, and AoAI and their averages in comparison to each other for the scenarios and cases of the system model. In this regard, they fall into two categories based on the queue sizes of infinity and $1$.

\subsection{The Scenarios with the Geo/Geo/1} \label{RelationsInQueueOnes}
Figs. \ref{fig:SamplePathInfiniteControllerFCFS} and \ref{fig:SamplePathInfiniteControllerLCFS} illustrate sample paths demonstrating the concurrent evolution of AoI, AoA, and AoAI for the scenarios of Geo/Geo/1 under two policies of FCFS and LCFS, respectively. The events of $\Lambda_1(n)$ and $\Lambda_2(n)$, and the process $Q_{\infty}(n)$ are also presented for the illustrated time slots in Table \ref{Table:SamplePathInfiniteControllerFCFS}.
For these scenarios, we always have $AI(n) \geq I(n)$ and $AI(n) \geq A(n)$: AoAI is an upper bound for both the AoI and AoA. The former is because the reception of a data packet is always prior to its utilization for an actuation. The latter is because the age of a packet cannot be lower than $1$ at the end of the actuation time slot. As a result then $\overline{AI} \geq  \bar{I}$ and $\overline{AI} \geq \bar{A}$. Note that $A(n) \lesseqqgtr I(n)$, as its three modes can be noticed and tracked in Figs. \ref{fig:SamplePathInfiniteControllerFCFS} and \ref{fig:SamplePathInfiniteControllerLCFS}. Nevertheless, ultimately, $\bar{I} = \bar{A}$. The intuition is discussed in Remark \ref{TheIntuituionForIAequality}. In conclusion, we have $\bar{I} = \bar{A} \leq \overline{AI}$.
Also, in extreme cases:
\begin{itemize}
    \item $\lambda_1 \rightarrow \lambda_2 \rightarrow 1$ : \  $\overline{AI} \rightarrow \bar{A} =  \bar{I} \rightarrow 1 $,
    \item $\lambda_2 \rightarrow 1$ : \  $\overline{AI} \rightarrow \bar{A}= \bar{I} = \displaystyle{\frac{1}{\lambda_1}} \geq 1 $.
\end{itemize}

\subsection{The Scenarios with the Geo/Geo/1/1}
Figs. \ref{fig:SamplePathBufferController} and \ref{fig:SamplePathBufferBattery} are sample paths of the concurrent evolution of AoI, AoA, and AoAI, for the scenarios Geo/Geo/1/1, with a controller and with a battery, respectively. Also, the events $\Lambda_1(n)$ and $\Lambda_2(n)$, and the process $Q_{1}(n)$ are presented in Table \ref{Table:SamplePathInfiniteControllerFCFS} for the Geo/Geo/1/1 with controller, and the events $\Lambda_1(n)$ and $\Lambda_2(n)$ and the processes $Q_{B}(n)$ and $B(n)$ for the scenario Geo/Geo/1/1 with a battery are presented in Table \ref{Table:SamplePathBufferBattery}. Most of the relations are the same as section \ref{RelationsInQueueOnes}: $AI(n) \geq I(n)$ and thus $AI(n) \geq A(n)$ and $\overline{AI} \geq  \bar{I}$ and $\overline{AI} \geq \bar{A}$, and also $A(n) \lesseqqgtr I(n)$, as again its three modes can be noticed and tracked in Figs. \ref{fig:SamplePathBufferController} and \ref{fig:SamplePathBufferBattery}. The only difference is $\bar{I} \leq \bar{A}$. This is because each data packet resets the AoI to $1$, but not each of those data packets resets the AoA to $1$ as well. This is because there is a single-size queue to store data packets, and thus, there are not-yet-actuated packets that get disposed of upon the arrival of new packets. In conclusion, $\bar{I} \leq \bar{A} \leq \overline{AI}$.
Also, in extreme cases:
\begin{itemize}
    \item $\lambda_1 \rightarrow 1$ : \  $\overline{AI} \rightarrow \bar{A} \rightarrow \displaystyle{\frac{1}{\lambda_2}} \geq \bar{I} \rightarrow 1 $,
    \item  $\lambda_2 \rightarrow 1$ : \  $\overline{AI} \rightarrow \bar{A} \rightarrow \bar{I} \rightarrow \displaystyle{\frac{1}{\lambda_1}} \geq 1 $,
    \item $\lambda_1 \rightarrow 1, \ \lambda_2 \rightarrow 1$ : \ $\overline{AI} \rightarrow \bar{A} \rightarrow \bar{I} \rightarrow  1 $.
\end{itemize}

\section{Numerical Results} \label{Numerical and Simulation Results}

The results in Table \ref{tableofFormulas}, except for two of them, are monotonically decreasing with respect to $\lambda_1$ and $\lambda_2$ (increasing $\lambda_1$ and $\lambda_2$ results in a decrease in the metric), which is intuitive. The exceptions are: first, the average AoAI for the Geo/Geo/1 under the FCFS policy with a controller, i.e., (\ref{AoAI_average_InfCache}), only against $\lambda_1$, and second, the average AoA for the Geo/Geo/1/1 with a battery, i.e., (\ref{Average_AoA_Battery}), against both $\lambda_1$ and $\lambda_2$. These two display critical points in the domain of $0 \leq \lambda_1, \ \lambda_2 \leq 1$.

These critical points can be obtained by taking derivatives from (\ref{AoAI_average_InfCache}) and (\ref{Average_AoA_Battery}) with respect to $\lambda_1$($\lambda_2$) and obtaining the valid roots that are functions of $\lambda_2$($\lambda_1$)\footnote{We refrain from bringing the expressions due to their complexity and size.}.

\begin{figure}[htbp]
     \centering
     \begin{subfigure}[b]{0.49\textwidth}
         \centering
         \includegraphics[width=\linewidth]{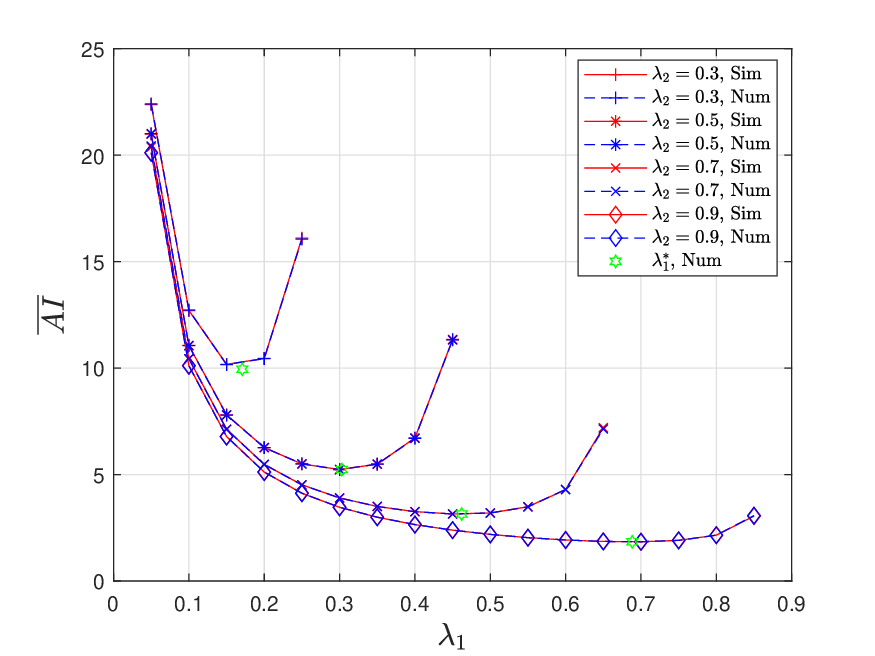}
        \caption{The average AoAI, for the Geo/Geo/1 under FCFS policy with a controller, for four different values of $\lambda_2$ against $\lambda_1$ and the optimum values of $\lambda_1$.}
         \label{AoAI_Infinite_q2_03579}
     \end{subfigure}
     \hfill 
     \begin{subfigure}[b]{0.49\textwidth}
         \centering
         \includegraphics[width=\linewidth]{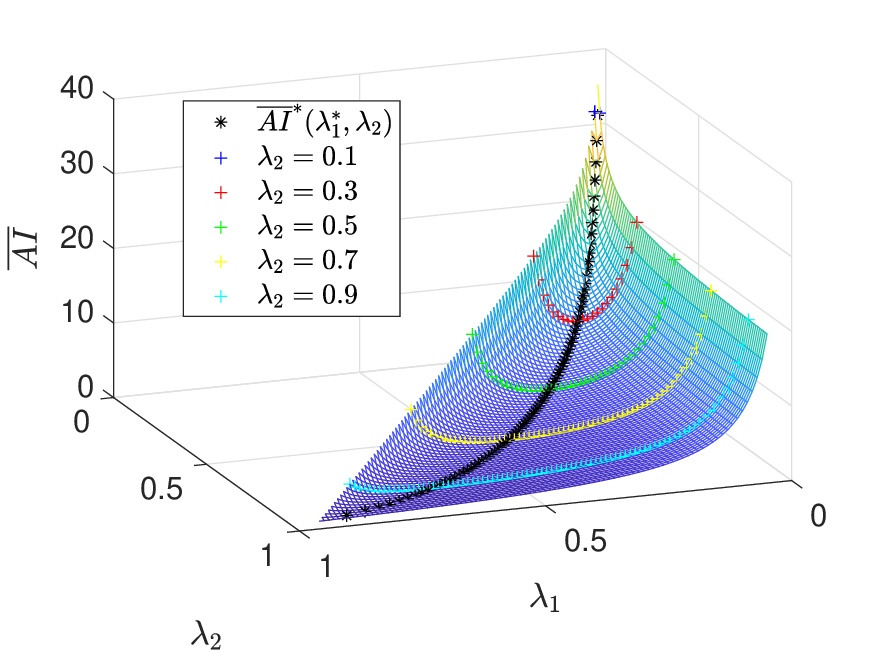}
        \caption{The Average AoAI, for the Geo/Geo/1 under FCFS policy with the controller, against $\lambda_1$ and $\lambda_2$ and optimum values of the $\bar{AI}$ and $\lambda_1$ against all values of $\lambda_2$.}
         \label{AoAI_3D_OPT}
     \end{subfigure}
     
     \caption{The average AoAI, for the Geo/Geo/1 under FCFS policy with a controller.}
     \label{fig:combined_aoai_plots}
\end{figure}

\subsection{The Average AoAI for the Geo/Geo/1 under the FCFS Policy with a Controller}
As mentioned, (\ref{AoAI_average_InfCache}) has critical points in the range $0 \leq \lambda_1 \leq 1$ when $\lambda_2$ is fixed, but not the other way around.
Fig. \ref{AoAI_Infinite_q2_03579} illustrates the results of the average AoAI for the Geo/Geo/1 under the FCFS policy with a controller for constant $\lambda_2$s and varying $\lambda_1$s in addition to each $\lambda_1^*$. Fig. \ref{AoAI_3D_OPT} shows the results against $\lambda_1$ and $\lambda_2$. Upon the surface, the curves of five constant $\lambda_2$ values are highlighted. Also, a black curve showing the minimum points against $\lambda_2$ values is drawn.

As can be seen, increasing the frequency of transmission counterintuitively does not always cause the average AoAI to decrease.

\begin{figure}[htbp]
    \centering
    \begin{subfigure}[b]{0.49\linewidth}
        \centering
        \includegraphics[width=\linewidth]{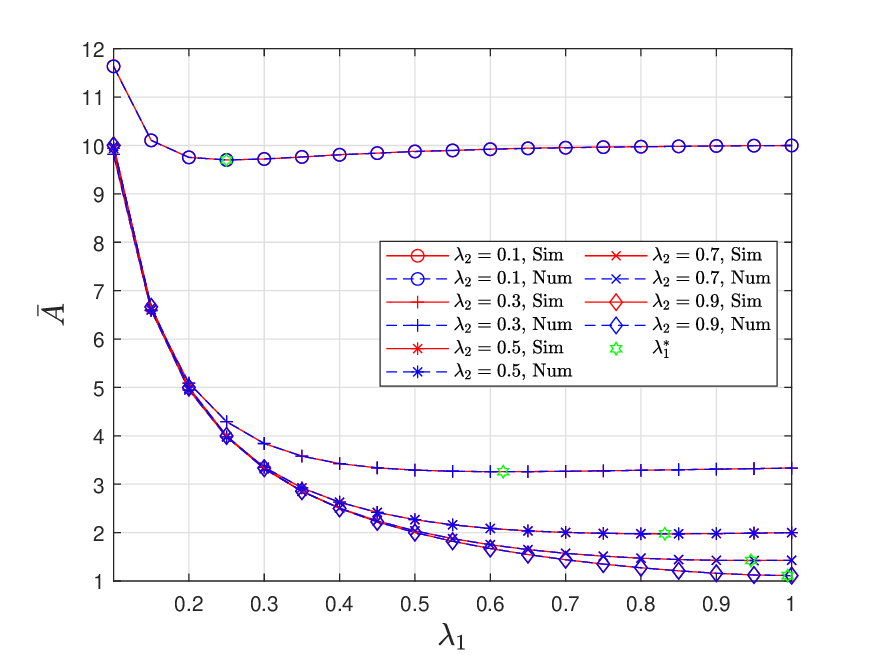}
        \caption{The average AoA.}
        \label{AoABattery_q2_0.13579}
    \end{subfigure}
    \hfill 
    \begin{subfigure}[b]{0.49\linewidth}
        \centering
        \includegraphics[width=\linewidth]{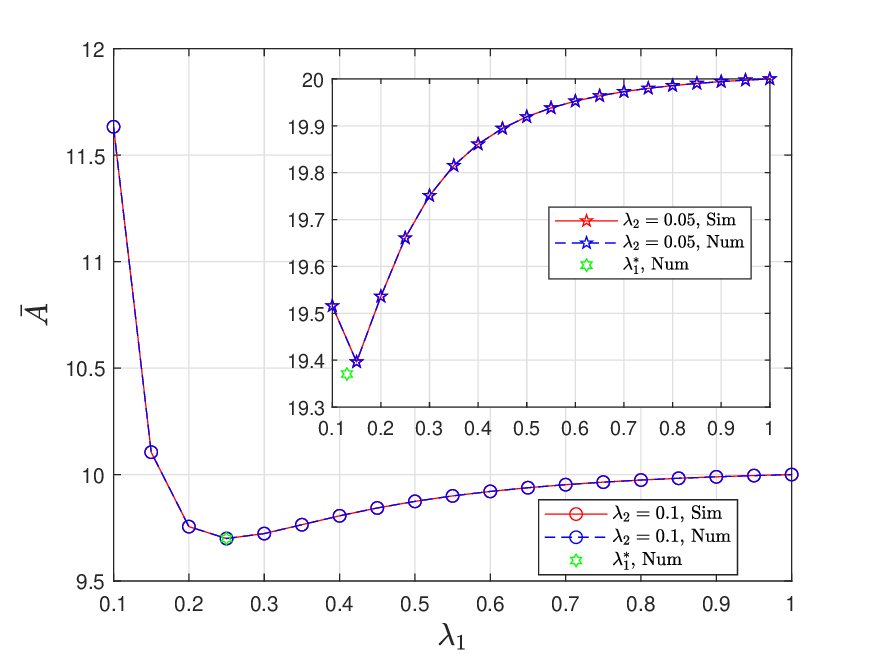}
        \caption{The average AoA.}
        \label{AoABatteryq2_0.1_0.05}
    \end{subfigure}
    
    \caption{The average AoA, for the Geo/Geo/1/1 with a battery, for different values of $\lambda_2$ against $\lambda_1$ and the optimum values of $\lambda_1$.}
    \label{fig:combined_battery_aoa}
\end{figure}

\begin{figure}[htbp]
    \centering
    \includegraphics[width=0.5\linewidth]{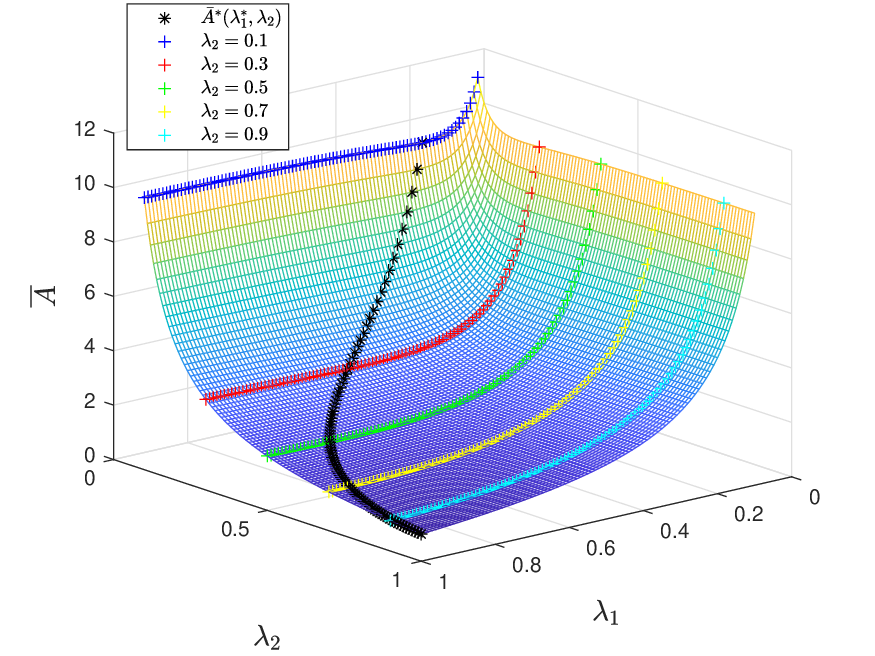}
     \caption{The Average AoA, for the Geo/Geo/1/1 with a battery, against $\lambda_1$ and $\lambda_2$ and optimum values of the $\bar{A}$ and $\lambda_1$ against all values of $\lambda_2$, with five different values of $\lambda_2$ highlighted.}
     \label{AoA_3D_OPT}
\end{figure}

This behavior can be explained as follows: \textit{Overwhelming the queue under FCFS policy, in which all the packets are due to be actuated in order, puts fresh packets backlogged behind the stale not-yet-actuated ones and results in higher average AoAI values. Therefore, in this scenario, an optimum rate of updating is more effective in improving the timeliness of actions when the age of the actuated packets is a critical factor.}

\subsection{The Average AoA for the Geo/Geo/1/1 with a Battery}
We observe that (\ref{Average_AoA_Battery}) exhibits approximately symmetric behavior with respect to its variables $\lambda_1$ and $\lambda_2$. It means that, considering $\bar{A}(\lambda_1,\lambda_2)$, we find that $\bar{A}(\hat{\lambda}_1,\hat{\lambda}_2) \approx \bar{A}(\hat{\lambda}_2,\hat{\lambda}_1)$. Given the symmetry, we will present $\bar{A}$ for various values of $\lambda_2$ against $\lambda_1$.

In Fig. \ref{AoABattery_q2_0.13579}, the average AoA for different values of $\lambda_2$ against $\lambda_1$ and the optimum points, $\lambda_1^*$, are depicted. Similarly, Fig. \ref{AoABatteryq2_0.1_0.05} displays average AoA for two low values of $\lambda_2$, more clearly.
Fig. \ref{AoA_3D_OPT} shows the results against $\lambda_1$ and $\lambda_2$. Upon the surface, we also highlight the curves of five constant $\lambda_2$ values. Also, we illustrate the black curve showing the minimum points against $\lambda_2$ values.

We observe that increasing $\lambda_1$, specifically for lower values of $\lambda_2$, can lead to a rise in the average AoA, which is counterintuitive.

This behavior is also explained as follows: \textit{Updating (charging) the single-size queue (battery) too frequently can consume the battery (data) resource too often and leaves the next data packets (energy units) not-yet-actuated (unutilized) for a while, increasing the average AoA values, especially when energy (data) availability is limited. Therefore, less frequent updates (charging) can be more effective in improving the timeliness of actions when the age of the actuated packets is not a critical factor.}

\section{Conclusions}
This paper examined the timeliness of actions in communication systems where actuation is constrained by controller permissions or energy availability. We showed that information-centric metrics such as the Age of Information (AoI) are insufficient when actions are the ultimate objective. To address this limitation, we revisited the Age of Actuation (AoA). We introduced the \emph{Age of Actuated Information (AoAI)}, which explicitly accounts for the age of the data at the moment it is used for actuation.

Closed-form expressions for the average AoA and AoAI were derived for several discrete-time system models with different queueing and actuation configurations.
We showed that AoA and AoAI coincide in instantaneous actuation systems but diverge when data storage is possible, revealing that actions with similar execution timing may differ significantly in effectiveness due to differences in data freshness. Moreover, our results uncovered counterintuitive regimes in which increasing update or actuation rates worsen action timeliness for both metrics.

In addition, the analysis yielded a novel characterization of the steady-state distribution of a Geo/Geo/1 queue under the FCFS discipline, expressed in terms of the queue length and the age of the head-of-line packet. Overall, this work contributes action-aware timeliness metrics and analytical tools that support the design of semantic and goal-oriented communication systems, and enables future extensions to more general network and control settings.

\begin{appendices}

\section{Proof of Theorem \ref{theo:AoA}} \label{App:AoA}

\subsection{Proof of Theorem \ref{theo:AoA}\ref{theo:The Average AoA Infinite}} \label{Proof of The Average AoA Infinite}

We cast a two-dimensional Markov chain as $(A,Q)$ in which $A=\{1,2,3,\hdots\}$ is the AoA at the end of each time slot and $Q=\{0,1,2,\hdots\}$ is the number of the packets left in the queue at the end of each time slot. Thus, the states would be $\{(1,0),(1,1),(1,2),\hdots,(2,0),(2,1),(2,2),\hdots\}$. The transition probabilities will be given in tabular form
\[
\begin{array}{c|ccccc}
    & (1,0) & (1,1) & (1,2) & (1,3) & \cdots \\
\hline
(A,0) &w & 0 & 0 & 0  & \cdots \\
(A,1) &y &w & 0 & 0 &  \cdots \\
(A,2) & 0 & y &w & 0 & \cdots \\
(A,3) & 0&  0&y &w & \cdots \\
\vdots & \vdots & \vdots & \vdots & \vdots & \ddots \\
\end{array}
,\
\begin{array}{c|cccccc}
    & (A+1,0) & (A+1,1) & (A+1,2) & (A+1,3)  & \cdots \\
\hline
(A,0) & y+z  & x & 0  & 0 & \cdots \\
(A,1) & 0 &z &x & 0  &  \cdots \\
(A,2) & 0 & 0 &z &x & \cdots \\
\vdots & \vdots & \vdots & \vdots & \vdots & \ddots 
\end{array}
\]

for $A=\{1,2,3,\cdots\}$, where $w=\lambda_1 \lambda_2, \ x=\lambda_1 \bar{\lambda}_2, \ y=\bar{\lambda}_1 \lambda_2, \ z=\bar{\lambda}_1 \bar{\lambda}_2 $. Then, the whole transition probability matrix can be obtained.
Defining $Q_j=\text{Pr}(Q=j)=\sum_{i=1}^{\infty} V_{i,j}$, the first set of equations for $V_{1,i}$s are 
\begin{equation} \label{For1}
V_{1,j}=wQ_{j}+yQ_{j+1}, \ \forall j \in \{0,1,2,3,\cdots\}.
\end{equation}
Other equations lead to two patterns of equations
\begin{equation}\label{ForOth_1}
V_{i+1,0}=\bar{\lambda}_1 V_{i,0}, \ \forall i \in \{1,2,3,\cdots\},
\end{equation}
\begin{equation} \label{ForOth_2}
V_{i+1,j}=x V_{i,j-1} + z V_{i,j}, \ \forall i,j \in \{1,2,3,\cdots\}.
\end{equation}
By (\ref{ForOth_1}), we can see that
\begin{equation} \label{Q_0_and_lambda_1}
Q_0=\sum_{i=1}^{\infty} V_{i,0}=\frac{V_{1,0}}{\lambda_1}
\end{equation}
Summing (\ref{ForOth_2}) over $i$ for a specific $j$ and also adding (\ref{For1}) for the same $j$ to it, results in
\begin{equation} \label{Cs}
(x+y)Q_{j}=xQ_{j-1}+yQ_{j+1}, \ \forall j \in \{1,2,3,\cdots\}.
\end{equation}
Summing (\ref{Cs}) over $j$ results in $Q_1=\frac{x}{y}Q_0$, by which and using equation of (\ref{Cs}) recursively we can obtain
\begin{equation} \label{C_i_C_0}
Q_{i}=\left(\frac{x}{y}\right)^i Q_{0}, \ \forall \in \{1,2,3,\cdots \}.
\end{equation}
By summing (\ref{C_i_C_0}) recursively for $i$s and equaling to $1$, we can see that
\begin{equation} \label{C_0}
Q_0=1-\frac{x}{y}.
\end{equation}
\begin{remark}
We could have also known (\ref{C_i_C_0}) and (\ref{C_0}) from Geo/Geo/1 \cite[Section 3.4.2, Equation 3.19]{srikant2013communication}, since the queue has $\lambda_1$ as the arrival probability and $\lambda_2$ as the departure probability.
\end{remark}
By (\ref{For1}) and (\ref{Cs}) we can see that
\begin{equation} \label{V_1j_V_10}
V_{1,j}=\left(\frac{x}{y}\right)V_{1,0}
\end{equation}
We define $A_i=\text{Pr}(A=i)=\sum_{j=0}^{\infty} V_{i,j}$. Then, from (\ref{For1}) and (\ref{C_0}) we can obtain
\begin{equation} \label{A_1}
A_1=\sum_{j=1}^{\infty} V_{1,j} =w+y(1-Q_0)=w+x=\lambda_1,
\end{equation}

Summing (\ref{ForOth_1}) over $j$ and adding (\ref{ForOth_2}) for a specific $i$ we can see
\begin{equation} \label{A_i_1_A_i}
A_{i+1}= T \bar{\lambda}_1^{i-1} +\bar{\lambda}_2 A_i ,
\end{equation}
where $T=( \lambda_2 - \lambda_1 + x ) \lambda_1 (1-\frac{x}{y})$. From (\ref{A_1}) and (\ref{A_i_1_A_i}) we can write
\begin{align*}
A_1 &= \lambda_1, \ A_2 = T+ \lambda_1 \bar{\lambda}_2, \ A_3 = T(\bar{\lambda}_1 + \bar{\lambda}_2)+ \lambda_1 \bar{\lambda}_2^2, A_4 = T(\bar{\lambda}_1^2 + \bar{\lambda}_1 \bar{\lambda}_2 + \bar{\lambda}_2^2)+ \lambda_1 \bar{\lambda}_2^3,\\
A_5 &= T(\bar{\lambda}_1^3 + \bar{\lambda}_1^2 \bar{\lambda}_2 + 
\bar{\lambda}_1 \bar{\lambda}_2^2 + \bar{\lambda}_2^3)+ \lambda_1 \bar{\lambda}_2^4, \cdots
\end{align*}
Weighting them by $i$, i.e. multiplying $A_i$ by $i$ and summing all terms, replacing $T$, $x$, and $y$, results in the average AoA
\begin{equation} \label{Average_AoA_InfiniteCahche}
\bar{A}=\left(\frac{1}{\lambda_1 \lambda_2^2} + \frac{1}{\lambda_1^2 \lambda_2}\right) T + \frac{\lambda_1}{\lambda_2^2}= \frac{1}{\lambda_1}. \hfill \qed
\end{equation}

\subsection{Proof of Theorem \ref{theo:AoA}\ref{theo:The Average AoA cache}} \label{Proof of The Average AoA cache}
Consider the two-dimensional Markov chain $(A,Q)$, in which $A \in \{1,2,3,\hdots\}$ is the AoA at the end of the time slot. The process of the queue is represented by $Q(n) \in {0,1}$, where $0$ signifies an empty queue and $1$ a full queue at the end of time slot $n$.

Thus, the states are $\{1,0\}$, $\{1,0\}$, $\{2,0\}$, $\{2,1\}$, $\{3,0\}$, $\{3,1\}$, $\cdots$ . We cannot have $\{1,1\}$, since $A=1$ means an instantaneous data packet has been utilized, and the queue should be empty.
Therefore, the transition probability matrix for this process can be represented As
\begin{equation} \label{TPM_AoA}
\mathbf{P}^{A}=
\begin{bmatrix}
\lambda_1 \lambda_2 & \bar{\lambda}_1  &  \lambda_1 \bar{\lambda}_2  & 0& 0&0& 0&\hdots\\
\lambda_1 \lambda_2 & 0 &0 &\bar{\lambda}_1 &  \lambda_1 \bar{\lambda}_2 &  0& 0& \hdots \\
 \lambda_2 & 0 &0 & 0 &\bar{\lambda}_2 &0 & 0& \hdots \\
\lambda_1 \lambda_2 & 0 &0 &0&0 &\bar{\lambda}_1 &  \lambda_1 \bar{\lambda}_2   & \hdots \\
 \lambda_2 & 0 &0 & 0 &0 & 0 &\bar{\lambda}_2 &  \hdots \\
\vdots & \vdots & \vdots&\vdots&\vdots&\vdots&\vdots& \ddots\\
\end{bmatrix} .
\end{equation}

The probability of occurrence of the state $(A,Q)$ is denoted by $V_{A,Q}$. The steady states vector is $\mathbf{V} = [V_{1,0}, V_{2,0}, V_{2,1}, V_{3,0}, V_{3,1},  \hdots]$. Hence, by the equations $\mathbf{V}\mathbf{P}=\mathbf{V}$ and $\mathbf{V} \mathbf{1}=1$, we can obtain the steady states. Note that $Q_{0}=\text{Pr}(Q=0)$.
The first equation of $\mathbf{V}\mathbf{P}=\mathbf{V}$ yield
\begin{equation} \label{V_101_C_0_1}
V_{1,0}=\lambda_1 \lambda_2 Q_{0} + \lambda_2 Q_{1}
\end{equation}
Writing the other equations of $\mathbf{V}\mathbf{P}=\mathbf{V}$ can provide us with two recursive structural equations
\begin{equation}
V_{i+1,0}=\bar{\lambda}_1  V_{i,0}, \  \forall i \in \{1,2,\hdots\}, \ V_{i+1,1}=\lambda_1 \bar{\lambda}_2  V_{i,0} + \bar{\lambda}_2 V_{i,1}, \  \forall i \in \{1,2,\hdots\}.
\end{equation}
Taking $w=\lambda_1 \lambda_2$, $x=\lambda_1 \bar{\lambda}_2$, $y=\bar{\lambda}_1 \lambda_2$, and $z=\bar{\lambda}_1 \bar{\lambda}_2$, by the recursive equations, we can write all of the state probabilities based on only $V_{1,0}$:
\begin{align*}
V_{1,0}&=V_{1,0}, \ V_{2,0}=(\bar{\lambda}_1) V_{1,0}, \ V_{2,1}=(x) V_{1,0}, V_{3,0}=(\bar{\lambda}_1^2) V_{1,0}, \ V_{3,1}=(x(\bar{\lambda}_1 + \bar{\lambda}_2)) V_{1,0},\\
V_{4,0}&=(\bar{\lambda}_1^3) V_{1,0}, \ V_{4,1}=(x(\bar{\lambda}_1^2 +\bar{\lambda}_1 \bar{\lambda}_2+ \bar{\lambda}_2^2)) V_{1,0}, \cdots  \numberthis \label{StateEquations_AoA_Geo/Geo/1/1_Controller}
\end{align*}
If we sum the corresponding terms, we obtain:
\begin{equation}
Q_{0}=\frac{1}{\lambda_1} V_{1,0}, \quad \quad
Q_{1}=\frac{\bar{\lambda}_2}{\lambda_2} V_{1,0}.
\end{equation}
We know that $Q_0+Q_1=1$. Then
\begin{equation} \label{V_101_V_100_1}
V_{1,0}=\frac{\lambda_1 \lambda_2}{ \lambda_2+ \lambda_1 \bar{\lambda}_2 }.
\end{equation}

If we weight the terms in (\ref{StateEquations_AoA_Geo/Geo/1/1_Controller}) corresponding to their AoA, meaning that we multiply terms $V_{A,Q}$ by $A$, sum all of the weighted terms, we get
\begin{equation} \label{v_10}
\bar{A}=\left( \frac{1}{\lambda_1^2} + x \left(\frac{\bar{\lambda}_1}{\lambda_1^2 \lambda_2} + \frac{1}{\lambda_1 \bar{\lambda}_2 \lambda_2^2} - \frac{1}{\lambda_1 \bar{\lambda}_2} \right)   \right) V_{1,0}.
\end{equation}
Replacing $V_{1,0}$ from (\ref{v_10}) and $x$, we can obtain the expected value of the AoA, i.e., the average AoA:
\begin{equation}
\bar{A}=\frac{ (\lambda_2-1) (\lambda_1^2+ \lambda_1 \lambda_2)   - \lambda_2^2}{\lambda_1 \lambda_2 \left(\lambda_1  ( \lambda_2-1) - \lambda_2\right) }. \hfill \qed
\end{equation}

\subsection{Proof of Theorem \ref{theo:AoA}\ref{theo:The Average AoA battery}} \label{Proof of The Average AoA battery}
Consider the three-dimensional Markov chain $(A,Q,B)$, in which $A \in \{1,2,3,\hdots\}$ is the AoA at the end of the time slot. The state of the queue is represented by $Q(n) \in \{0,1\}$, where $0$ signifies an empty queue and $1$ a full queue at the end of time slot $n$. Similarly, $B(n) \in \{0,1\}$ indicates the battery's process, with $0$ representing an empty battery and $1$ a full battery at the end of time slot $n$.

Thus, the states are $\{1,0,0\}$, $\{1,0,1\}$, $\{2,0,0\}$, $\{2,0,1\}$, $\{2,1,0\}$, $\{3,0,0\}$, $\{3,0,1\}$, $\{3,1,0\}$, $\cdots$ . Obviously, we cannot have states such as $\{A,1,1\}$. Because, as we mentioned earlier, if there are both an energy unit and a data packet, they should already have been utilized for an actuation together. Also, we cannot have $\{1,1,0\}$, since $A=1$ means an instantaneous data packet has been utilized, and the queue should be empty.
Therefore, the transition probability matrix for this process can be represented in (\ref{TPM_Battery_AoA}).

\begin{figure*}
\vspace{1em}
\setcounter{MaxMatrixCols}{30}
\begin{equation} \label{TPM_Battery_AoA}
\resizebox{0.7\textwidth}{!}{$
\mathbf{P}^{A,Ba}=
\begin{bmatrix}
\lambda_1 \lambda_2 & 0  & \bar{\lambda}_1 \bar{\lambda}_2 & \bar{\lambda}_1 \lambda_2 & \lambda_1 \bar{\lambda}_2&0& 0& 0&0& 0&0& 0&\hdots\\
\lambda_1 \bar{\lambda}_2 & \lambda_1 \lambda_2 &0 &\bar{\lambda}_1 &  0 & 0&0&0&0& 0& 0& 0& \hdots \\
\lambda_1 \lambda_2 & 0 &0 & 0 &0 & \bar{\lambda}_1  \bar{\lambda}_2 & \bar{\lambda}_1 \lambda_2 & \lambda_1  \bar{\lambda}_2 &0& 0& 0& 0& \hdots \\
\lambda_1 \bar{\lambda}_2 & \lambda_1 \lambda_2 & 0&0& 0& 0 & \bar{\lambda}_1  &0& 0&0& 0& 0  & \hdots \\
\lambda_2 & 0&0&0&0&0&0& \bar{\lambda}_2& 0&0& 0& 0  & \hdots\\
\lambda_1 \lambda_2 &0&0&0&0&0&0&0&\bar{\lambda}_1  \bar{\lambda}_2 & \bar{\lambda}_1 \lambda_2 & \lambda_1  \bar{\lambda}_2& 0 & \hdots\\
\lambda_1 \bar{\lambda}_2 & \lambda_1 \lambda_2 & 0& 0& 0&0&0& 0& 0 & \bar{\lambda}_1  &0& 0  & \hdots\\
\lambda_2 & 0&0&0&0&0&0&0&0&0& \bar{\lambda}_2& 0 & \hdots\\
\vdots & \vdots & \vdots&\vdots&\vdots&\vdots&\vdots&\vdots&\vdots&\vdots&\vdots&\vdots & \ddots\\
\end{bmatrix}
$}
\end{equation}
\setcounter{MaxMatrixCols}{30}
\begin{equation} \label{TPM_Inf_AoAI}
\resizebox{0.7\textwidth}{!}{$
\mathbf{P}^{AI,\infty}=
\begin{bmatrix}
 \lambda_1 \lambda_2 &  \bar{\lambda}_1 & \lambda_1 \bar{\lambda}_2 & 0& 0&0& 0&0&0&0&0& 0&0&0&0&\hdots\\
\lambda_1 \lambda_2 &      0 &0&\bar{\lambda}_1& \lambda_1  \bar{\lambda}_2 &0 &  0 & 0&0&0&0&0&0&0&0& \hdots \\
 0 &  \bar{\lambda}_1 \lambda_2 &\lambda_1 \lambda_2&0&0&\bar{\lambda}_1 \bar{\lambda}_2  &\lambda_1 \bar{\lambda}_2 & 0& 0   &0& 0&  0&0& 0&  0& \hdots \\
 \lambda_1 \lambda_2 &      0 &0&0 &0&0 &0&\bar{\lambda}_1& \lambda_1  \bar{\lambda}_2 &0 &  0 & 0&0&0&0& \hdots \\
 0 &  \bar{\lambda}_1 \lambda_2 &\lambda_1 \lambda_2&0&0&0&0&0&0&\bar{\lambda}_1 \bar{\lambda}_2  &\lambda_1 \bar{\lambda}_2 & 0& 0   &0& 0&  \hdots \\
 0 & 0&0&  \bar{\lambda}_1 \lambda_2 &\lambda_1 \lambda_2&0&0&0&0&0&0&\bar{\lambda}_1 \bar{\lambda}_2  &\lambda_1 \bar{\lambda}_2 & 0& 0   & \hdots \\
0&0&0&0&0&  \bar{\lambda}_1 \lambda_2 &\lambda_1 \lambda_2&0&0&0&0&0&0&\bar{\lambda}_1 \bar{\lambda}_2  &\lambda_1 \bar{\lambda}_2 & \hdots \\
\lambda_1 \lambda_2 &      0 &0&0 &0&0 &0&0&0 &0 &  0 & 0&0&0&0& \hdots \\
 0 &  \bar{\lambda}_1 \lambda_2 &\lambda_1 \lambda_2&0&0&0&0&0&0&0  &0 & 0& 0   &0& 0&  \hdots \\
0 & 0&0&  \bar{\lambda}_1 \lambda_2 &\lambda_1 \lambda_2&0&0&0&0&0&0&0  &0 & 0& 0   & \hdots \\
0&0&0&0&0&  \bar{\lambda}_1 \lambda_2 &\lambda_1 \lambda_2&0&0&0&0&0&0&0  &0 & \hdots \\
0 &0&0&0&0&0&0&  \bar{\lambda}_1 \lambda_2 &\lambda_1 \lambda_2&0  &0 & 0& 0   &0& 0&  \hdots \\
0 & 0&0&0&0&0&0&0&0&  \bar{\lambda}_1 \lambda_2 &\lambda_1 \lambda_2&0  &0 & 0& 0   & \hdots \\
0&0&0&0&0&0&0&0&0&0&0&  \bar{\lambda}_1 \lambda_2 &\lambda_1 \lambda_2&0&0& \hdots \\
0&0&0&0&0&0&0&0&0&0&0&0&0&  \bar{\lambda}_1 \lambda_2 &\lambda_1 \lambda_2& \hdots \\
\vdots&\vdots&\vdots&\vdots&\vdots&\vdots&\vdots&\vdots&\vdots&\vdots&\vdots&\vdots&\vdots&\vdots&\vdots & \ddots\\ 
\end{bmatrix}
$}
\end{equation}
\setcounter{MaxMatrixCols}{30}
\begin{equation} \label{TPM_1_AoAI}
\resizebox{0.6\textwidth}{!}{$
\mathbf{P}^{AI,1}=
\begin{bmatrix}
\lambda_1 \lambda_2 & \lambda_1 \bar{\lambda}_2 & \bar{\lambda}_1  & 0& 0&0& 0&0&0&0&\hdots\\
\lambda_1 \lambda_2 & 0 &\bar{\lambda}_1 \lambda_2& \lambda_1  \bar{\lambda}_2 &\bar{\lambda}_1 \bar{\lambda}_2 &  0 & 0&0&0&0& \hdots \\
\lambda_1 \lambda_2 &0 &0&  \lambda_1 \bar{\lambda}_2 &0 & \bar{\lambda}_1& 0   &0& 0&  0& \hdots \\
\lambda_1 \lambda_2 &0& \bar{\lambda}_1 \lambda_2& 0 & 0& 0 &  \lambda_1 \bar{\lambda}_2   &\bar{\lambda}_1 \bar{\lambda}_2& 0& 0& \hdots \\
\lambda_1 \lambda_2 &	0&	0&	0&	0&	\bar{\lambda}_1  \lambda_2&  \lambda_1 \bar{\lambda}_2&	0&	 \bar{\lambda}_1 \bar{\lambda}_2&0& \hdots\\
\lambda_1 \lambda_2 &0 &0&0&0& 0&  \lambda_1 \bar{\lambda}_2  &0& 0 &\bar{\lambda}_1     & \hdots \\
\vdots&\vdots&\vdots&\vdots&\vdots&\vdots&\vdots&\vdots&\vdots&\vdots & \ddots\\ 
\end{bmatrix}
$}
\end{equation}
\end{figure*}
\begin{figure*}
\vspace{-1.5em}
\setcounter{MaxMatrixCols}{30}
\begin{equation} \label{TPM_Battery_AoAI}
\resizebox{0.7\textwidth}{!}{$
\mathbf{P}^{AI,Ba}=
\begin{bmatrix}
\lambda_1 \lambda_2 & 0  & \lambda_1 \bar{\lambda}_2 & \bar{\lambda}_1 \bar{\lambda}_2 & \bar{\lambda}_1 \lambda_2&0& 0& 0&0& 0&0& 0&0&0&0&\hdots\\

\lambda_1 \bar{\lambda}_2 & \lambda_1 \lambda_2 &0&0 &\bar{\lambda}_1 &  0 & 0&0&0&0&0&0&0& 0&  0& \hdots \\

\lambda_1 \lambda_2 &0 &0& \bar{\lambda}_1 \lambda_2 &0 &\lambda_1  \bar{\lambda}_2& \bar{\lambda}_1  \bar{\lambda}_2   &0& 0& 0& 0& 0& 0& 0&  0& \hdots \\

\lambda_1 \lambda_2 & 0 &0&0 &0&\lambda_1  \bar{\lambda}_2& 0 & \bar{\lambda}_1  \bar{\lambda}_2& \bar{\lambda}_1 \lambda_2    &0& 0&0& 0& 0& 0& \hdots \\

\lambda_1 \bar{\lambda}_2 & \lambda_1 \lambda_2&	0&	0&	0&	0&	0&	0&	\bar{\lambda}_1&	0&	0&	0 &0&  0& 0& \hdots\\

\lambda_1 \lambda_2 &0 &0& \bar{\lambda}_1 \lambda_2 &0&0& 0& 0& 0 &\lambda_1  \bar{\lambda}_2& \bar{\lambda}_1  \bar{\lambda}_2   &0&0&  0& 0& \hdots \\

\lambda_1 \lambda_2 & 0 &0& 0 &0&0 &0&\bar{\lambda}_1  \lambda_2& 0 & \lambda_1  \bar{\lambda}_2&0& \bar{\lambda}_1  \bar{\lambda}_2    &0&  0& 0& \hdots \\

\lambda_1 \lambda_2 & 0 &0& 0 &0&0 &0&0&0&\lambda_1  \bar{\lambda}_2& 0&0 & \bar{\lambda}_1  \bar{\lambda}_2& \bar{\lambda}_1 \lambda_2    & 0& \hdots \\

\lambda_1 \bar{\lambda}_2 & \lambda_1 \lambda_2&	0&	0&	0&	0&	0& 0&	0&	0&	0&	0&	0&	\bar{\lambda}_1&	0&	\hdots \\

\vdots&\vdots&\vdots&\vdots&\vdots&\vdots&\vdots&\vdots&\vdots&\vdots&\vdots&\vdots&\vdots&\vdots&\vdots & \ddots\\ 
\end{bmatrix}
$}
\end{equation}

\end{figure*}

The probability of occurrence of the state $(A,Q,B)$ is denoted by $V_{A,Q,B}$. The steady states vector is $\mathbf{V} = [V_{1,0,0}, V_{1,0,1}, V_{2,0,0}, V_{2,0,1}, V_{2,1,0}, \hdots]$. Hence, by the equations $\mathbf{V}\mathbf{P}=\mathbf{V}$ and $\mathbf{V} \mathbf{1}=1$, we can obtain the steady states. We define $\Delta_{j,k}=\sum_{i=1}^{\infty}V_{i,j,k}$.% Note also that $A_i=\sum_{j=0}^{1}\sum_{k=0}^{1} V_{i,j,k}$ and $B_k=\sum_{i=1}^\infty\sum_{j=0}^{1} V_{i,j,k}$.
The first two equations of $\mathbf{V}\mathbf{P}=\mathbf{V}$ yield
\begin{align}
V_{1,0,0}&=\lambda_1 \lambda_2 \Delta_{0,0} + \lambda_1 \bar{\lambda}_2 \Delta_{0,1} + \lambda_2 \Delta_{1,0}, \ V_{1,0,1}=\lambda_1 \lambda_2 \Delta_{0,1}.  \label{V_101_Delta_01}
\end{align}
Writing the other equations of $\mathbf{V}\mathbf{P}=\mathbf{V}$ can provide us with three recursive equations
\begin{align}
V_{i+1,0,0}&=\bar{\lambda}_1 \bar{\lambda}_2  V_{i,0,0}, \  \forall i \in \{1,2,\hdots\}, \ V_{i+1,0,1}=\bar{\lambda}_1 \lambda_2  V_{i,0,0} + \bar{\lambda}_1 V_{i,0,1}, \  \forall i \in \{1,2,\hdots\},\\
V_{i+1,1,0}&=\lambda_1 \bar{\lambda}_2   V_{i,0,0} + \bar{\lambda}_2 V_{i,1,0}, \  \forall i \in \{1,2,\hdots\}.
\end{align}
Taking $w=\lambda_1 \lambda_2$, $x=\lambda_1 \bar{\lambda}_2$, $y=\bar{\lambda}_1 \lambda_2$, and $z=\bar{\lambda}_1 \bar{\lambda}_2$, by the recursive equations, we can write all of the state probabilities based only on $V_{1,0,0}$ and $V_{1,0,1}$
\begin{align*}
V_{1,0,0}&=V_{1,0,0}, \ V_{1,0,1}=V_{1,0,1}, \ V_{2,0,0}=z V_{1,0,0},\\
V_{2,0,1}&=y V_{1,0,0} + \bar{\lambda}_1 V_{1,0,1}, \ V_{2,1,0}=x V_{1,0,0},\\
V_{3,0,0}&=z^2 V_{1,0,0}, \ V_{3,0,1}=\left(zy+\bar{\lambda}_1 y\right) V_{1,0,0} + \bar{\lambda}_1^2 V_{1,0,1},\\
V_{3,1,0}&=\left(zx+\bar{\lambda}_2 x\right) V_{1,0,0}, \ V_{4,0,0}=z^3 V_{1,0,0},\\
V_{4,0,1}&=\left(yz^2+\bar{\lambda}_1  yz + \bar{\lambda}_1^2 y\right) V_{1,0,0} + \bar{\lambda}_1^3 V_{1,0,1},\\
V_{4,1,0}&=\left(xz^2+\bar{\lambda}_2 xz+\bar{\lambda}_2 x\right) V_{1,0,0}, \cdots \numberthis \label{StateEquations_AoA_Geo/Geo/1/1_Battery}
\end{align*}
If we sum the corresponding terms, we obtain
\begin{equation} \label{Delta_01} % lable originaly for the middle one
\Delta_{0,0}=\frac{1}{1-z} V_{1,0,0}, \ \Delta_{0,1}=\frac{y}{\lambda_1} \frac{1}{1-z} V_{1,0,0} + \frac{1}{\lambda_1} V_{1,0,1}, \ \Delta_{1,0}=\frac{x}{\lambda_2} \frac{1}{1-z} V_{1,0,0}.
\end{equation}
By summing and using that the sum of probabilities equals to $1$ ($\mathbf{V} \mathbf{1}=1$), we can obtain one relation and by replacing (\ref{Delta_01}) into (\ref{V_101_Delta_01}) we get another relation between $V_{1,0,1}$ and $V_{1,0,0}$. Solving the two relations yield
\begin{equation} \label{v_100}
V_{1,0,0}=\frac{  \lambda_1 (1-\bar{\lambda}_1 \bar{\lambda}_2) \bar{\lambda}_2 \lambda_2  }{  \bar{\lambda}_1 \lambda_2^3 +\lambda_1 \lambda_2 \bar{\lambda}_2 + \bar{\lambda}_1 \bar{\lambda}_2 \lambda_2^2 + \lambda_1^2 \bar{\lambda}_2^2 },
\end{equation}
and
\begin{equation} \label{v_101}
V_{1,0,1}=\frac{  \lambda_1 \bar{\lambda}_1  \lambda_2^3  }{  \bar{\lambda}_1 \lambda_2^3 +\lambda_1 \lambda_2 \bar{\lambda}_2 + \bar{\lambda}_1 \bar{\lambda}_2 \lambda_2^2 + \lambda_1^2 \bar{\lambda}_2^2 }.
\end{equation}

If we weight the terms in (\ref{StateEquations_AoA_Geo/Geo/1/1_Battery}) corresponding to their AoA, meaning that we multiply terms $V_{A,Q,B}$ by $A$, sum all of the weighted terms, and replace the values of $V_{1,0,0}$ and $V_{1,0,1}$ from (\ref{v_100}) and (\ref{v_101}), and $w$, $x$, $y$, and $z$ we can obtain the expected value of the AoA, i.e., the average AoA as in (\ref{Average_AoA_Battery}).
\hfill \qed

\section{Proof of Theorem \ref{theo:AoAI}} \label{App:AoAI}

\subsection{Proof of Theorem~\ref{theo:AoAI}\ref{theo:The Average AoAI Infinite} and Theorem~\ref{theo:Closed_form}} \label{Proof of The Average AoAI Infinite}
We consider the multi-dimensional Markov chain $(AI,C(AI))$ where $C(.)$ is defined in Definition \ref{def:Queue_state}. Note that here we denote it by $C(AI)$, since if $AI$ is the AoAI, the age of the oldest packet in the queue is at most $AI-1$. Then the states of $(AI,C(AI))$ are
$
\{(1),(2,0),(2,1),(3,0,0),(3,0,1),(3,1,0),(3,1,1),\cdots\}. 
$
\noindent The transition probability matrix is presented in (\ref{TPM_Inf_AoAI}).

\begin{remark}
Following Definition \ref{def:Queue_state}, we indicate the probability of each \textit{queue state} by $\Gamma_{\gamma_{h}\gamma_{h-1}\cdots\gamma_{1}}$, regardless of the AoAI. Each $\Gamma_{\gamma_{h}\gamma_{h-1}\cdots\gamma_{1}}$ captures all the states of the $(AI,C(AI))$. Thus, it is the summation of the probabilities of all the infinite number of states that it includes. For example, $\Gamma_{10}=V_{310}+V_{4010}+V_{50010}+\cdots$, and $\Gamma_{101}=V_{4101}+V_{40101}+V_{400101}+\cdots$. We also define $\Gamma_{0}=V_{1}+V_{20}+V_{300}+\cdots$.

Thus for a unique $\Gamma_{\gamma_{h}\gamma_{h-1}\cdots\gamma_{1}}$, $h$ represents the age of the \textit{oldest packet in the queue}, and $l=\sum_{j=1}^{h}\gamma_j$ is the number of $1$s in $(\gamma_{h}\gamma_{h-1}\cdots\gamma_{1})$ which is in fact the \textit{queue length}.
\end{remark}

Writing the Markov equations from (\ref{TPM_Inf_AoAI}) for the original states $(AI,C(AI))$, we have
\begin{align*}
V_{1}&=w\Gamma_{0}, \ V_{20}=\bar{\lambda}_1 V_{1} + y \Gamma_{1}, \ V_{21}=x V_{1} + w \Gamma_{1}, V_{300}=\bar{\lambda}_1 V_{20} + y \Gamma_{10}, \ V_{301}=x V_{20} + w \Gamma_{10},\\
V_{310}&= z V_{21} + y \Gamma_{11}, \ V_{311}= x V_{21} + w \Gamma_{11}, \cdots
\end{align*}
The number of states related to each $AI$ is $2^{AI-1}$. We define $D_{i}$ as the summation of the probabilities of all the states with $i$ packets in the queue. From the definitions, we know $D_{0}=\Gamma_{0}$. If we sum the states contributing to $D_{i}, \ i \in \{0,1,2,\cdots\}$, and recursively use them for next ones, we ultimately get $D_{i+1}=\left(\frac{x}{y}\right)D_{i}$. We know $\sum_{i=0}^{\infty}D_{i}=D_{0}\frac{1}{1-\frac{x}{y}}=1$. Thus, $D_{0}=1-\frac{x}{y}$, and finally $D_{i}=\left(\frac{x}{y}\right)^{i}\left(1-\frac{x}{y}\right) \ i \in \{0,1,2,\cdots\}$, which is a well-known result (See Remark 1). If we define $f_i$ as
\begin{equation}
f_i=xD_i+wD_{i+1},
\end{equation}
then summing the states contributing to each unique $\Gamma_{\gamma_{j}\gamma_{j-1}\cdots\gamma_{1}}$, we get
\begin{equation*}
\Gamma_{1}=\left(\frac{1}{1-w}\right)\left(xD_0+wzD_1+wyD_2\right),
\end{equation*}
\begin{equation*}
\Gamma_{10}=z\Gamma_{1}+yf_1, \ \Gamma_{11}=x\Gamma_{1}+wf_1,
\end{equation*}
\begin{equation*}
\Gamma_{100}=z\Gamma_{10}+y(zf_1+yf_2), \ \Gamma_{101}=x\Gamma_{10}+w(zf_1+yf_2),
\end{equation*}
\begin{equation*}
\Gamma_{110}=z\Gamma_{11}+y(xf_1+wf_2), \ \Gamma_{111}=x\Gamma_{11}+w(xf_1+wf_2),
\end{equation*}
\begin{equation*}
\Gamma_{1000}=z\Gamma_{100}+y(z(zf_1+yf_2)+y(zf_2+yf_3)),
\end{equation*}
\begin{equation*}
\Gamma_{1001}=x\Gamma_{100}+w(z(zf_1+yf_2)+y(zf_2+yf_3)),
\end{equation*}
\begin{equation*}
\Gamma_{1010}=z\Gamma_{101}+y(x(zf_1+yf_2)+w(zf_2+yf_3)),
\end{equation*}
\begin{equation*}
\Gamma_{1011}=x\Gamma_{101}+w(x(zf_1+yf_2)+w(zf_2+yf_3)),
\end{equation*}
\begin{equation*}
\Gamma_{1100}=z\Gamma_{110}+y(z(xf_1+wf_2)+y(xf_2+wf_3)),
\end{equation*}
\begin{equation*}
\Gamma_{1101}=x\Gamma_{110}+w(z(xf_1+wf_2)+y(xf_2+wf_3)),
\end{equation*}
\begin{equation*}
\Gamma_{1110}=z\Gamma_{111}+y(x(xf_1+wf_2)+w(xf_2+wf_3)),
\end{equation*}
\begin{equation*}
\Gamma_{1111}=x\Gamma_{111}+w(x(xf_1+wf_2)+w(xf_2+wf_3)),
\end{equation*}
and so on. If we make multiplications and simplifications, we get that for a given $h$, the probabilities of the ones with the same $l$ are the same:
\begin{equation*}
\Gamma_{10}=z\Gamma_{1}+yf_1, \ \Gamma_{11}=x\Gamma_{1}+wf_1,
\
\Gamma_{100}=z^2 \Gamma_{1}+ 2yzf_1 + y^2f_2, 
\end{equation*}
\begin{equation*}
\Gamma_{101}=\Gamma_{110}=xz \Gamma_{1}+ (wz+xy) f_1+ wyf_2, 
\
\Gamma_{111}=x^2 \Gamma_{1}+ 2wxf_1 + w^2f_2,
\end{equation*}
\begin{equation*}
\Gamma_{1000}=z^3 \Gamma_{1} + 3yz^2f_1 + 3y^2zf_2 + y^3f_3,
\end{equation*}
\begin{align*}
\Gamma_{1001}=\Gamma_{1010}=\Gamma_{1100}= xz^2\Gamma_{1}+(wz^2+2xyz)f_1 + (2wyz+xy^2)f_2 + wy^2f_3,
\end{align*}
\begin{align*}
\Gamma_{1011}=\Gamma_{1101}=\Gamma_{1110}=x^2z\Gamma_{1} + (x^2y+2wxz)f_1 + (w^2z+2wxy)f_2 + w^2yf_3,
\end{align*}
\begin{equation*}
\Gamma_{1111}=x^3\Gamma_{1}+ 3wx^2f_1 + 3w^2xf_2 + w^3f_3, \cdots
\end{equation*}
The $\Gamma$s for different $l$s as a function of $h$ are:
\begin{equation*}
\Gamma(h,1)=z^{h-1}\Gamma_1+\sum_{k=1}^{\infty} {h-1 \choose k} y^{k} z^{h-1-k} f_{k},
\end{equation*}
\begin{align*}
\Gamma(h,2)=xz^{h-2}\Gamma_1+\sum_{k=0}^{\infty} {h-2 \choose k} w y^{k} z^{h-2-k} f_{k+1} + \sum_{k=1}^{\infty} {h-2 \choose k} x y^{k} z^{h-2-k} f_{k},
\end{align*}
\begin{align*}
\Gamma(h,3)=x^2 z^{h-3}\Gamma_1 +& \sum_{k=0}^{\infty}{h-3 \choose k}w^{2}y^{k}z^{h-3-k}f_{k+2}+ 2 {h-3 \choose k} wxy^{k}z^{h-3-k} f_{k+1}\\
&+ \sum_{k=1}^{\infty} {h-3 \choose k}x^2 y^{k}z^{h-3-k}f_{k}, \cdots
\end{align*}
Then, the general pattern as the closed-form formula of $\Gamma(h,l)$ is (\ref{Final_Closed_Form}).
\hfill \qed

We define $\Gamma_{h:}$ as the summation of all the $\Gamma_{\Phi_{j}\Phi_{j-1}\cdots\Phi_{1}}$s with the length $h$ of $\Phi_{j}\Phi_{j-1}\cdots\Phi_{1}$. Then
\begin{equation}
\Gamma_{h:}=\bar{\lambda}_2^{h-1}\Gamma_1+\sum_{j=1}^{h-1} {h-1 \choose j} \lambda_2^j \bar{\lambda}_2^{h-j-1} f_j, \ \forall h \in \mathbb{N}
\end{equation}

Then, summing corresponding terms from the states we get
\begin{equation}
AI_1=w\Gamma_0
\end{equation}
\begin{equation}
AI_{i}=yV_{i-1} + \bar{\lambda}_2 AI_{i-1} + \lambda_2 \Gamma_{i-1:}, \ i=2,3,\cdots
\end{equation}

Then, substituting from those, we get
\[
\begin{aligned}
AI_1 &= V_1, \ AI_2 = V_1 \left(y + \bar{\lambda}_2\right)  + \lambda_2 \Gamma_1, AI_3 = V_1 \left(y\left(\bar{\lambda}_1 + \bar{\lambda}_2 \right) + \bar{\lambda}_2^2\right) + \Gamma_1 \left(y^2 + 2\lambda_2 \bar{\lambda}_2\right) + \lambda_2^2 f_1, \\
AI_4 &= V_1 \left(y\left(\bar{\lambda}_1^2  + \bar{\lambda}_1 \bar{\lambda}_2  + \bar{\lambda}_2^2 \right) + \bar{\lambda}_2^3\right) + \Gamma_1 \left(y^2\left(\bar{\lambda}_1   + \bar{\lambda}_2 +z\right) + 3 \lambda_2 \bar{\lambda}_2^2\right) + f_1 \left(y^3 + 3 \lambda_2^2 \bar{\lambda}_2\right) + f_2 \lambda_2^3,\\
AI_5 &= V_1 \left(y\left(\bar{\lambda}_1^3  + \bar{\lambda}_1^2 \bar{\lambda}_2  + \bar{\lambda}_1\bar{\lambda}_2^2 +\bar{\lambda}_2^3 \right) + \bar{\lambda}_2^4\right) + \Gamma_1 \left(y^2\left(\bar{\lambda}_1^2 + \bar{\lambda}_1 \bar{\lambda}_2 + \bar{\lambda}_2^2 +  z \left(\bar{\lambda}_1+\bar{\lambda}_2\right) + z^2 \right) + 4 \lambda_2 \bar{\lambda}_2^3\right) +\\
&f_1 \left(y^3 \left(\bar{\lambda}_1 + \bar{\lambda}_2 +2z\right) + 6 \lambda_2^2 \bar{\lambda}_2^2\right) + f_2\left( y^4 + 4\lambda_2^3\bar{\lambda}_2\right) + f_3 \lambda_2^4, \cdots
\end{aligned}%
\]

Weighting the terms and summing them and substituting the $V_1$, $\Gamma_1$, and $f_i$s, and also $w$, $x$, $y$, and $z$, we get 
\begin{equation} \label{AoAI_average_InfCache_Appendix}
\overline{AI}=\frac{\lambda_1^2 (\lambda_2-1) (\lambda_1 - \lambda_2) + \lambda_2^3( \lambda_1-1)}{\lambda_1 (\lambda_1 - \lambda_2) \lambda_2^2}
\end{equation}
\hfill \qed

\subsection{Proof of Theorem \ref{theo:AoAI}\ref{theo:The Average AoAI cache}} \label{Proof of The Average AoAI cache}
Consider the two-dimensional Markov chain $(AI,I)$, in which $AI \in \{1,2,3,\hdots\}$ is the AoAI and $I \in \{1,2,3,\hdots \}$ is the AoI, at the end of each time slot. Thus, the states are $\{1,1\}$, $\{2,1\}$, $\{2,2\}$, $\{3,1\}$, $\{3,2\}$, $\{3,3\}$, $\cdots$ . Obviously, we cannot have states such that $AI<I$. Then, the transition probability matrix for this Markov chain is (\ref{TPM_1_AoAI}).

The first equation is
\begin{equation} \label{V_11}
V_{1,1}=\lambda_1 \lambda_2.
\end{equation} 
Other equations yield patterns as
\begin{equation} \label{V_AI1}
V_{AI+1,1}= \lambda_1  \bar{\lambda}_2 \sum_{j=1}^{AI} V_{AI,j}  \ , AI \geq 1,
\end{equation}
\begin{equation} \label{V_AI1equalI1}
V_{AI+1,I+1}=\bar{\lambda}_1   V_{AI,I} + \bar{\lambda}_1  \lambda_2 \sum_{i=AI+1}^{\infty} V_{i,I} \ , AI=I \geq 1,
\end{equation}
\begin{equation} \label{V_AI1notequalI1}
V_{AI+1,I+1}=\bar{\lambda}_1  \bar{\lambda}_2 V_{AI,I}  \ , AI \neq I \geq 1 .
\end{equation}
Summing the equations (\ref{V_11}) and  (\ref{V_AI1}) for all $AI$s yield
\begin{equation} \label{I_1}
I_1 = \lambda_1 .
\end{equation}
By using (\ref{I_1}) (in structures of (\ref{V_AI1equalI1})) and all the equations and structures (\ref{V_11}), (\ref{V_AI1}), (\ref{V_AI1equalI1}), and (\ref{V_AI1notequalI1}) we can recursively write all the states.

Then, noting $\lambda_1-y=z$, the steady states can be written and simplified against $\lambda_1$ and $\lambda_2$.
\begin{align*}
V_{1,1}=&w, \ V_{2,1}=w(x), \ V_{2,2}=w(z)+\lambda_1 y, \ V_{3,1}=w(x^2 + xz ) + \lambda_1 y (x), \ V_{3,2}=w(xz) \\
V_{3,3}=&w(z^2)+\lambda_1 y( \bar{\lambda}_1 + z) \ , V_{4,1}=w(x^3 +  2x^2z +  x z^2 )+ \lambda_1 y ( x^2 + \bar{\lambda}_1 x +  x z)  \\
V_{4,2}=&w(x^2z+ xz^2) + \lambda_1 y (xz), \ V_{4,3}=w(xz^2),  \ V_{4,4}=w( z^3)+\lambda_1 y ( \bar{\lambda}_1^2  +  \bar{\lambda}_1 z + z^2 ), \cdots
\numberthis \label{StateEquations_AoAI_Geo/Geo/1/1_Controller}
\end{align*}
Since $V_{1,1}$ is explicitly obtained, we straightly proceed to obtain the average AoAI. To do so, we weight the states equations (\ref{StateEquations_AoAI_Geo/Geo/1/1_Controller}) by their AoAI, i.e. multiply $V_{AI,I}$ by $AI$, and sum them together. One can see that the summation can be stated as
\begin{equation}
\overline{AI}=\sum_{i=1}^{\infty} i V_{i,i} + \sum_{i=2}^{\infty} \left( \frac{i-1}{1-z} +\frac{1}{(1-z)^2} \right) V_{i,1}.
\end{equation}

Summing all the terms together, we obtain the average AoAI for the system model in this work as
\begin{equation}
\overline{AI}=\frac{1}{\lambda_1}+\frac{1}{\lambda_2}-1  \hfill \qed
\end{equation}

\subsection{Proof of Theorem~\ref{theo:AoAI}\ref{theo:The Average AoAI battery}} \label{Proof of The Average AoAI battery}
Let $(AI,I,B)$ be a three-dimensional Markov chain, in which $AI \in \{1,2,3,\hdots\}$ is the AoAI at the end of time slot $n$, $B \in \{0,1\}$ is the battery being empty or not, with $0$ representing an empty battery and $1$ a full battery at the end of time slot $n$, and $I \in \{1,2,3,\hdots \}$ is the AoI at the end of time slot $n$. Thus, the states are $\{1,1,0\}$, $\{1,1,1\}$, $\{2,1,0\}$, $\{2,2,0\}$, $\{2,2,1\}$, $\{3,1,0\}$, $\{3,2,0\}$, $\{3,3,0\}$, $\{3,3,1\}$, $\{4,1,0\}$, $\cdots$. Obviously, we cannot have states such that $AI<I$. Also, we cannot have states such that $B=1, I<AI$, since if there was enough energy, then the packet in the queue should have been utilized and the AoAI and AoI should have been the same, i.e., $B=1$ only if $AI=I$. The transition probability matrix is given by (\ref{TPM_Battery_AoAI}).

The first two Markov equations are
\begin{equation} \label{v_110andv_111}
V_{1,1,0}=\lambda_1 \lambda_2 B_0 + \lambda_1 \bar{\lambda}_2 B_1 , \ V_{1,1,1}=\lambda_1 \lambda_2 B_1,  
\end{equation}
in which $B_1=\sum_{i=1}^{\infty}V_{i,i,1}$ and $B_0=1-B_1$. Other equations yield patterns as
\begin{equation} \label{V_AI10}
V_{AI,1,0}=\lambda_1 \bar{\lambda}_2 \sum_{I=1}^{AI-1} V_{AI-1,I,0},
\end{equation}
\begin{equation} \label{V_AII0}
V_{AI,I,0}= \bar{\lambda}_1  \bar{\lambda}_2 V_{AI-1,I-1,0}  \ , I \neq 1, I \neq AI,
\end{equation}
\begin{equation} \label{V_AII0_2}
V_{AI,I,0}=\bar{\lambda}_1  \bar{\lambda}_2 V_{AI-1,I-1,0} + \bar{\lambda}_1  \lambda_2 \sum_{i=AI}^{\infty} V_{i,I-1,0} \ , AI=I \geq 2,
\end{equation}
\begin{equation} \label{V_AII1}
V_{AI,I,1}=\bar{\lambda}_1  \lambda_2 V_{AI-1,I-1,0} + \bar{\lambda}_1  V_{AI-1,I-1,1} \ , AI=I \geq 2.
\end{equation}
Then $V_{1,1,0}$ and $V_{1,1,1}$ from (\ref{v_110andv_111}) yield
\begin{equation} \label{AI_1}
AI_1=\lambda_1 \lambda_2 + \lambda_1 \bar{\lambda}_2 B_1.
\end{equation}
Summing over different values of $AI$, for (\ref{V_AI10}) we obtain
\begin{equation} \label{I_1_A_1}
I_1 -AI_1 = \lambda_1 \bar{\lambda}_2 B_0.
\end{equation}
From (\ref{AI_1}) and (\ref{I_1_A_1}) we get $I_1=\lambda_1$, and by employing it in structures of (\ref{V_AII0_2}) and using all the structures (\ref{V_AI10}), (\ref{V_AII0}), (\ref{V_AII0_2}), and (\ref{V_AII1}) we can write
\begin{align*}
V_{1,1,0}=&V_{1,1,0}, \ V_{1,1,1}=V_{1,1,1}, \ V_{2,1,0}=x V_{1,1,0}, \ V_{2,2,0}=\left(z-y\right) V_{1,1,0} + \left(-y\right) V_{1,1,1}+y \lambda_1,\\
V_{2,2,1}=&y V_{1,1,0} + \bar{\lambda}_1 V_{1,1,1}, \ V_{3,1,0}=\left(x^2 + xz-xy\right) V_{1,1,0}+\left(-xy\right)V_{1,1,1}+xy\lambda_1, \\
V_{3,2,0}=&xz V_{1,1,0}, \ V_{3,3,0}=\left(z^2 -2yz\right) V_{1,1,0} + \left(-2yz\right)V_{1,1,1} + 2yz\lambda_1,\\
V_{3,3,1}=&\left(yz -y^2 +\bar{\lambda}_1 y \right) V_{1,1,0} + \left(-y^2 + \bar{\lambda}_1^2\right)V_{1,1,1} + y^2 \lambda_1, \cdots  \numberthis \label{StateEquations_AoAI_Geo/Geo/1/1_Battery}
\end{align*}
Summing all the states and equaling them to $1$ provides one relation between $V_{1,1,0}$ and $V_{1,1,1}$ and the other relation can be obtained from $V_{1,1,1}$ in (\ref{v_110andv_111}) by summing only the terms of which $B_1$ consists.
Having these two relations, we obtain 
\begin{equation} \label{v_110_}
V_{1,1,0}=\frac{\lambda_1 (\lambda_1^2  \bar{\lambda}_2 +\lambda_2) \bar{\lambda}_2 \lambda_2}{
\lambda_1^2 \bar{\lambda}_2^2 + \lambda_2^2 + \lambda_1 \lambda_2(1 - 2 \lambda_2)},
\end{equation}
and
\begin{equation} \label{v_111_}
V_{1,1,1}= \frac{\bar{\lambda}_1 \lambda_1 \lambda_2^3}{
\lambda_1^2 \bar{\lambda}_2^2 + \lambda_2^2 + \lambda_1 \lambda_2(1 - 2 \lambda_2)}.
\end{equation}
If we multiply the states equations (\ref{StateEquations_AoAI_Geo/Geo/1/1_Battery}) by their AoAI, i.e. multiply $V_{AI,I,B}$ by $AI$, and sum them together, and replace (\ref{v_110_}) and (\ref{v_111_}) and also $w$, $x$, $y$, and $z$, and simplify, we obtain the average AoAI (\ref{Average_AoAI_Battery}).
\hfill \qed
\end{appendices}

\bibliographystyle{ieeetr}
\bibliography{bibliography.bib}

@ARTICLE{kountouris2021semantics,
  author={Kountouris, Marios and Pappas, Nikolaos},
  journal={IEEE Comm. Mag.}, 
  title={Semantics-Empowered Communication for Networked Intelligent Systems}, 
  year={2021}}

@ARTICLE{gunduz2022beyond,
  author={Gündüz, Deniz and Qin, Zhijin and Aguerri, Inaki Estella and Dhillon, Harpreet S. and Yang, Zhaohui and Yener, Aylin and Wong, Kai Kit and Chae, Chan-Byoung},
  journal={IEEE Journal on Selected Areas in Comm.}, 
  title={Beyond Transmitting Bits: Context, Semantics, and Task-Oriented Communications}, 
  year={2023}}

@ARTICLE{chen2021optimization,
  author={Chen, Zheng and Pappas, Nikolaos and Björnson, Emil and Larsson, Erik G.},
  journal={IEEE Open J. of the Comm. Society}, 
  title={Optimizing Information Freshness in a Multiple Access Channel With Heterogeneous Devices}, 
  year={2021}}

@ARTICLE{yates2021age,
  author={Yates, Roy D. and Sun, Yin and Brown, D. Richard and Kaul, Sanjit K. and Modiano, Eytan and Ulukus, Sennur},
  journal={IEEE J. on Selected Areas in Comm.}, 
  title={Age of Information: An Introduction and Survey}, 
  year={2021}}

@ARTICLE{hatami2021aoi,
  author={Hatami, Mohammad and Leinonen, Markus and Codreanu, Marian},
  journal={IEEE T. on Comm.}, 
  title={AoI Minimization in Status Update Control With Energy Harvesting Sensors}, 
  year={2021}}

@ARTICLE{hatami2022on,
  author={Hatami, Mohammad and Leinonen, Markus and Chen, Zheng and Pappas, Nikolaos and Codreanu, Marian},
  journal={IEEE T. on Comm.}, 
  title={On-Demand AoI Minimization in Resource-Constrained Cache-Enabled IoT Networks With Energy Harvesting Sensors}, 
  year={2022}}

@ARTICLE{zheng2019closed,
  author={Zheng, Xi and Zhou, Sheng and Jiang, Zhiyuan and Niu, Zhisheng},
  journal={IEEE T. on Wireless Comm.}, 
  title={Closed-Form Analysis of Non-Linear Age of Information in Status Updates With an Energy Harvesting Transmitter}, 
  year={2019}}

@ARTICLE{arafa2019age,
  author={Arafa, Ahmed and Yang, Jing and Ulukus, Sennur and Poor, H. Vincent},
  journal={IEEE T. on Information Theory}, 
  title={Age-Minimal Transmission for Energy Harvesting Sensors With Finite Batteries: Online Policies}, 
  year={2020}}

@ARTICLE{elmagid2022age,
  author={Abd-Elmagid, Mohamed A. and Dhillon, Harpreet S.},
  journal={IEEE J. on Selected Areas in Information Theory}, 
  title={Age of Information in Multi-source Updating Systems Powered by Energy Harvesting}, 
  year={2022}}

@ARTICLE{bacingolu2019optimal,
  author={Bacinoglu, Baran Tan and Sun, Yin and Uysal, Elif and Mutlu, Volkan},
  journal={J. of Comm. and Networks}, 
  title={Optimal status updating with a finite-battery energy harvesting source}, 
  year={2019}}

@ARTICLE{krikidis2019average,
  author={Krikidis, Ioannis},
  journal={IEEE Wireless Comm. Letters}, 
  title={Average Age of Information in Wireless Powered Sensor Networks}, 
  year={2019}}

@ARTICLE{ibrahim2016stability,
  author={Ibrahim, Abdelrahman M. and Ercetin, Ozgur and ElBatt, Tamer},
  journal={IEEE J. on Selected Areas in Comm.}, 
  title={Stability Analysis of Slotted Aloha With Opportunistic RF Energy Harvesting}, 
  year={2016}}

@book{srikant2013communication,
  title={Communication networks: an optimization, control, and stochastic networks perspective},
  author={Srikant, Rayadurgam and Ying, Lei},
  year={2013},
  publisher={Cambridge University Press}
}

@ARTICLE{abdelmagid2020aoi,
  author={Abd-Elmagid, Mohamed A. and Dhillon, Harpreet S. and Pappas, Nikolaos},
  journal={IEEE T. on Vehicular Tech.}, 
  title={AoI-Optimal Joint Sampling and Updating for Wireless Powered Communication Systems}, 
  year={2020}}

@ARTICLE{delfani2025semantics,
  author={Delfani, Erfan and Pappas, Nikolaos},
  journal={IEEE Comm. Letters}, 
  title={Semantics-Aware Updates From Remote Energy Harvesting Devices to Interconnected LEO Satellites}, 
  year={2025}}

@book{pappas2023age,
  title={Age of Information: Foundations and Applications},
  author={Pappas, Nikolaos and Abd-Elmagid, Mohamed A and Zhou, Bo and Saad, Walid and Dhillon, Harpreet S},
  year={2023},
  publisher={Cambridge University Press}
}

@ARTICLE{wu2017optimal,
  author={Wu, Xianwen and Yang, Jing and Wu, Jingxian},
  journal={IEEE T. on Green Comm. and Networking},
  title={Optimal Status Update for Age of Information Minimization With an Energy Harvesting Source}, 
  year={2018}}

@ARTICLE{gindullina2021age,
  author={Gindullina, Elvina and Badia, Leonardo and Gündüz, Deniz},
  journal={IEEE T. on Green Comm. and Networking},
  title={Age-of-Information With Information Source Diversity in an Energy Harvesting System}, 
  year={2021}}

@INPROCEEDINGS{nikkhah2023age,
	author={Nikkhah, Ali and Ephremides, Anthony and Pappas, Nikolaos},
	booktitle={IEEE INFOCOM WKSHPS}, 
	title={Age of Actuation in a Wireless Power Transfer System}, 
	year={2023}}

@article{nikkhah2023ageo,
  title={Age of Actuation and Timeliness: Semantics in a Wireless Power Transfer System},
  author={Nikkhah, Ali and Ephremides, Anthony and Pappas, Nikolaos},
  journal={arXiv preprint arXiv:2312.13919},
  year={2023}
}

@ARTICLE{xu2023optimal,
  author={Xu, Chao and Zhang, Xinyan and Yang, Howard H. and Wang, Xijun and Pappas, Nikolaos and Niyato, Dusit and Quek, Tony Q. S.},
  journal={IEEE T. on Mobile Computing}, 
  title={Optimal Status Updates for Minimizing Age of Correlated Information in IoT Networks With Energy Harvesting Sensors}, 
  year={2023}}

@ARTICLE{sudevalayam2011energy,
  author={Sudevalayam, Sujesha and Kulkarni, Purushottam},
  journal={IEEE Comm. Surveys and Tutorials}, 
  title={Energy Harvesting Sensor Nodes: Survey and Implications}, 
  year={2011}}

@ARTICLE{lu2023semantics,
  author={Lu, Zhilin and Li, Rongpeng and Lu, Kun and Chen, Xianfu and Hossain, Ekram and Zhao, Zhifeng and Zhang, Honggang},
  journal={IEEE Comm. Surveys \& Tutorials}, 
  title={Semantics-Empowered Communications: A Tutorial-cum-Survey}, 
  year={2023}}

@ARTICLE{feng2021age,
  author={Feng, Songtao and Yang, Jing},
  journal={IEEE T. on Comm.}, 
  title={Age of Information Minimization for an Energy Harvesting Source With Updating Erasures: Without and With Feedback}, 
  year={2021}}

@ARTICLE{zhao2025age,
  author={Zhao, Fangming and Pappas, Nikolaos and Zhang, Meng and Yang, Howard H.},
  journal={IEEE J. on Selected Areas in Comm.}, 
  title={Age of Information in Random Access Networks With Energy Harvesting}, 
  year={2025}}

@ARTICLE{jia2021age,
  author={Jia, Xiangdong and Cao, Shengnan and Xie, Mangang},
  journal={IEEE Wireless Comm. Letters}, 
  title={Age of Information of Dual-Sensor Information Update System With HARQ Chase Combining and Energy Harvesting Diversity}, 
  year={2021}}

@INPROCEEDINGS{chang2020age,
  author={Chang, Bo and Li, Liying and Zhao, Guodong and Meng, Zhen and Imran, Muhammad Ali and Chen, Zhi},
  booktitle={IEEE INFOCOM WKSHPS}, 
  title={Age of Information for Actuation Update in Real-Time Wireless Control Systems}, 
  year={2020}}

@ARTICLE{zhao2019toward,
  author={Zhao, Guodong and Imran, Muhammad Ali and Pang, Zhibo and Chen, Zhi and Li, Liying},
  journal={IEEE Comm. Mag.}, 
  title={Toward Real-Time Control in Future Wireless Networks: Communication-Control Co-Design}, 
  year={2019}}

@Article{kyung2024priority,
AUTHOR = {Kyung, Yeunwoong and Sung, Jihoon and Ko, Haneul and Song, Taewon and Kim, Youngjun},
TITLE = {Priority-Aware Actuation Update Scheme in Heterogeneous Industrial Networks},
JOURNAL = {Sensors},
VOLUME = {24},
YEAR = {2024},
NUMBER = {2},
ARTICLE-NUMBER = {357}
}

@article{chang2021effective,
  author    = {Bo Chang and Burak Kizilkaya and Liying Li and Guodong Zhao and Zhi Chen and Muhammad Ali Imran},
  title     = {Effective age of information in real-time wireless feedback control systems},
  journal   = {Science China Information Sciences},
  year      = {2021}}

@INPROCEEDINGS{champatiperformance2019,
  author={Champati, Jaya Prakash and Mamduhi, Mohammad H. and Johansson, Karl H. and Gross, James},
  booktitle={IEEE INFOCOM WKSHPS}, 
  title={Performance Characterization Using AoI in a Single-loop Networked Control System}, 
  year={2019}}

@misc{luo2025informationfreshnesssemanticsinformation,
      title={{From Information Freshness to Semantics of Information and Goal-oriented Communications}}, 
      author={Jiping Luo and Erfan Delfani and Mehrdad Salimnejad and Nikolaos Pappas},
      year={arXiv: 2512.12758, 2025},
}

@INPROCEEDINGS{nikkhah2024aoai,
  author={Nikkhah, Ali and Ephremides, Anthony and Pappas, Nikolaos},
  booktitle={IEEE GLOBECOM}, 
  title={Age of Actuated Information and Age of Actuation in a Data-Caching Energy Harvesting Actuator}, 
  year={2024}}

@ARTICLE{li2026aoi,
  author={Li, Shuang and Hu, Huimin and Yang, Hong-Chuan and Xiong, Ke and Fan, Pingyi and Ben Letaief, Khaled},
  journal={IEEE T. on Cognitive Comm. and Networking}, 
  title={AoI Minimization for WP-IoT With PDQN-Based Hybrid Offline/Online Learning: A Joint Scheduling and Transmission Design Approach}, 
  year={2026}}

@INPROCEEDINGS{abdelmagid2022distribution,
  author={Abd-Elmagid, Mohamed A. and Dhillon, Harpreet S.},
  booktitle={IEEE INFOCOM WKSHPS}, 
  title={Distribution of AoI in EH-powered Multi-source Systems under Non-preemptive and Preemptive Policies}, 
  year={2022}}

@INPROCEEDINGS{xiao2024infromation,
  author={Xiao, Shuyu and Sun, Xinghua and Zhan, Wen and Wang, Xijun},
  booktitle={IEEE ITW}, 
  title={Information Freshness in Random Access Networks with Energy Harvesting}, 
  year={2024}}

@ARTICLE{ngo2025timely,
  author={Ngo, Khac-Hoang and Durisi, Giuseppe and Munari, Andrea and Lázaro, Francisco and Amat, Alexandre Graell i},
  journal={IEEE T. on Comm.}, 
  title={Timely Status Updates in Slotted ALOHA Networks With Energy Harvesting}, 
  year={2025}}

@INPROCEEDINGS{jaiswal2023age,
  author={Jaiswal, Akanksha and Chattopadhyay, Arpan},
  booktitle={WiOpt}, 
  title={Age-of-Information Minimization for Energy Harvesting Sensor in Non-Stationary Environment}, 
  year={2023}}

@INPROCEEDINGS{zhang2025aoi,
  author={Zhang, Xingyuan and Jia, Xiangdong and Bao, Hongli and Wu, Jingjing},
  booktitle={IEEE VTC}, 
  title={AoI-Optimized Scheduling for Wireless-Powered Cognitive Radio Networks with Short-Packet Communication}, 
  year={2025}}

@INPROCEEDINGS{banerjee2024minimizing,
  author={Banerjee, Subhankar and Ulukus, Sennur},
  booktitle={WiOpt}, 
  title={Minimizing Age of Information in an Energy-Harvesting Scheduler with Rateless Codes}, 
  year={2024}}

\end{document}